\newtheorem{theorem}{Theorem}[section]
\newtheorem{lemma}{Lemma}[section]
\newtheorem{proposition}{Proposition}[section]
\newtheorem{corollary}{Corollary}[section]
\theoremstyle{Definition}
\newtheorem{definition}{Definition}[section]
\newtheorem{example}{Example}[section]
\theoremstyle{remark}
\newtheorem{remark}[theorem]{Remark}
\numberwithin{equation}{section}
\begin{document}

\begin{flushleft}
 {\bf\Large  Special Affine Stockwell Transform: Theory, Uncertainty Principles and Applications }

\parindent=0mm \vspace{.2in}

{\bf{  Aamir H. Dar$^{1},$  M. Younus Bhat* $^{2}$ }}
\end{flushleft}

{{\it $^{1}$ Department of  Mathematical Sciences,  Islamic University of Science and Technology, Kashmir, India. E-mail: $\text{ahdkul740@gmail.com}$}}

{{\it $^{2}$ Department of  Mathematical Sciences,  Islamic University of Science and Technology, Kashmir, India. E-mail: $\text{gyounusg@gmail.com}$}}

\begin{quotation}
\noindent
{\footnotesize {\sc Abstract.}

 In this paper, we study the convolution structure in the special affine Fourier transform domain to
combine the advantages of the well known special affine Fourier and Stockwell transforms into a novel
integral transform coined as special affine Stockwell transform and investigate the associated constant Q-property
in the joint time-frequency domain. The preliminary analysis encompasses the derivation of
the fundamental properties, Rayleigh’s energy theorem,
inversion formula and range theorem. Besides, we also  derive a
direct relationship between the recently introduced special affine scaled Wigner distribution and the
proposed SAST. Further,  we establish  Heisenberg’s uncertainty principle, logarithmic uncertainty principle and Nazarov's uncertainty principle  associated with the proposed SAST. Towards the culmination
of this paper, some potential applications with simulation are presented. \\

{ Keywords:}Stockwell transform; Special affine Fourier transform; Special affine scaled·Wigner distribution; Time-frequency analysis; Uncertainty principle.\\
\noindent
\textit{2000 Mathematics subject classification:42C40; 81S30; 65R10; 44A35;42C15. } }
\end{quotation}

\section{Introduction}\label{sec intro}
\noindent
The special affine Fourier transform (SAFT) is a neoteric addition to the theory of Fourier transforms introduced by Abe and Sheridan \cite{wz1,wz2} and it is studied by
various researchers \cite{8s,9s}. The special affine Fourier transform is a six parameter time-frequency analysis tool that generalizes several existing time-frequency analysis tool including  the Fourier transform, the fractional Fourier trans-
form, the linear canonical transform (LCT) and the Fresnel transform \cite{3s,12s,13s}. The special affine Fourier transform or the the inhomogeneous canonical transform \cite{1s} can be regarded as a time-shifted and frequency-modulated version of well known linear canonical integral transforms \cite{100s,101s,102s}. Due to the time shifting and frequency-modulating parameters, the special affine Fourier transform is much more flexible than classical linear canonical transform and has
found wide applications  in digital signal processing and optical systems. The special affine Fourier transform of any signal $f(t)\in L^2(\mathbb R)$  is defined as \cite{1102s}
 \begin{equation}\label{qpft}
\mathbf S_{{\mathbf N}}[f(t)](w)=\int_{\mathbb R}f(t)\mathcal K_{{\mathbf N}}(t,w)dt,\,B\ne0
\end{equation}
where ${\mathbf N}=\begin{bmatrix}
A & B ;& p\\
C & D ;& q
\end{bmatrix}$ with $AD-BC=1$ is a augmented matrix parameter
and $\mathcal K_{{\mathbf N}}(t,w)$ denotes the
kernel of the special affine Fourier transform given by
\begin{equation}\label{ker qpft}
 K_{{\mathbf N}}(t,w)=K_Be^{\frac{i}{2B}\left(At^2+2t(p-w)-2w(Dp-Bq)+D(w^2+p^2)\right)}.
\end{equation}
where $K_B=\frac{1}{\sqrt{i 2\pi B}}.$
We only consider the case of $B\ne0$, since the SAFT is just a chirp multiplication
operation if $B=0$.The inversion formulae corresponding to SAFT is given by
\begin{equation}\label{2}
f(t)=\int_{\mathbb R}\mathbf S_{{{\mathbf N}}}[f](w){\mathcal K_{{{\mathbf N}}^{-1}}(w,t)}dw,
\end{equation}
where $${{\mathbf N}}^{-1}=\begin{bmatrix}
D & -B & Bq-Dp\\
-C & A & Cp-Aq
\end{bmatrix}$$
The Parseval’s formula for the SAFT reads as follows
\begin{equation}\label{3}
\langle f,g\rangle=\langle \mathbf S_{{{\mathbf N}}}[f],\mathbf S_{{{\mathbf N}}}[g]\rangle,\qquad\forall f,g\in L^2(\mathbb R).\\
\end{equation}
Since the inception of  the special affine Fourier transform, it has achieved  a  respectable status within a short span of time and has been applied to electrical, optical and communication systems,
and several other fields of science and technology \cite{15w,20w,21w}.
Apart from applications, the theoretical
framework of special affine Fourier transform has also been extensively studied and investigated including the Poisson summation formulae, sampling
theory, uncertainty principles, convolution theorems and so on \cite{15s}-\cite{19s}. Besides a lot of advantages, the special affine Fourier transform suffers a major drawback, as it is incapable of obtaining information about the local properties of the non-stationary signal owing to its global kernel. Thus, SAST is inadequate  in situations demanding a joint analysis of time and spectral
characteristics of a signal.\\
In signal processing the Fourier transform can be considered as one of the efficient  time-frequency representation tool. Gabor \cite{10g} in sixties introduced  various time-frequency analysis tools including, the short-time
Fourier transform, the Wigner distribution  or also the continuous wavelet transform
where all of these representations have a same common point, that is to analyse the signal  in both time and frequency domain simultaneously.\\
In spite of the fact that the short-time Fourier transform (STFT)\cite{st1} did much to ameliorate the limitations of FT, but still in some cases the STFT cannot track the signal dynamics properly for a
signal with both very high frequencies of short duration and very low frequencies of long duration, because of the fixed width of the
analysing window. Removing of the rigidity
of the window function is one of the motivations for continuous wavelet transform (WT) \cite{st2}. WTs have fascinated the researchers  with their versatile applicability and have been applied in a number of fields including sampling theory,   geophysics,  signal and
image processing, , quantum mechanics, astrophysics,  differential equations, neuroscience, medicine,chemistry and so on \cite{st3,st4,st5,st6}.
Although the WT have its own merits, it suffers from lack of phase information, poor directionality, shift sensitivity and other issues, which lead to unsatisfactory time-frequency distribution of non-stationary signals. To address the limitations of the wavelet transform, Stockwell et al. \cite{st11} proposes a new transform called as the Stockwell transform (ST)  by combining the merits of STFT and WT. In the context of time-frequency analysis classical ST can be viewed as multi-scale local FT or phase corrected WT and thus, provides more accurate information of local behaviour of a signal \cite{st12}.  Since its inception, ST has been successfully used to analyse signals in numerous applications, such as ground vibrations, biomedical imaging,
gravitational waves, geophysics, oceanology, optics,  and bio-informatics,
in general \cite{st13}-\cite{st22}. The
mathematical theory of the Stockwell transform is under development in different directions.

A  number of extensions of the ST have  have been proposed in recent years by coupling the Stockwell transform with some other appropriate transforms
such as, Fractional Fourier transform , linear canonical transform, Dunkl transform and others (see \cite{st30,st31,st32,st31a}). These extensions i.e. the fractional Stockwell transform (FRST) and linear canonical Stockwell transform (LCST) and Dunkl–Stockwell transform, analyse signals with abilities of multi-resolution, multi-angle , multi-scale and temporal localization and hence are very much suitable in dealing with chirp-like signals which is ubiquitous in nature. Keeping
in view the extensions of STs along with the profound applicability of
the SAFT, we are deeply inspired to intertwine their respective merits in a novel transform namely the
 special affine Stockwell transform (SAST) by employing the convolution structure of special affine
Fourier transforms. The prime advantage of this intertwining lies in
the fact that the proposed  SAST inherits the excellent mathematical
properties of both the traditional ST and SAST along with some fascinating properties of its own.  To be specific, it has the capability to display time and SAST domain information jointly. Moreover, many old and new time-frequency transforms such as the traditional ST,
the FRST,the LCST and the Fresnel-Stockwell transform can be derived as particular cases
of the proposed SAST. The proposed transform  enjoys certain extra degrees of freedom and thus
provide higher flexibility in applications to various
disciplines of science and engineering including non-stationary signal processing, optics, harmonic analysis, geophysics, quantum mechanics and operator theory.
\subsection{Paper Contributions}\,\\

The contributions  of this  paper are summarized below:\\

\begin{itemize}

\item  We introduce the novel integral transform coined as the  special affine Stockwell transform (SAST).\\

\item To study the fundamental properties including  Rayleigh’s energy theorem, inversion formula and  range theorem.\\

  \item  To establish a direct relationship between the special affine scaled Wigner  distribution and
the proposed SAST.\\

  \item To extend the scope of the study, we formulate Heisenberg's uncertainty principle, the logarithmic uncertainty principle and Nazarov's uncertainty principle associated with the SAST.\\
\item  To testify the efficiency Of the proposed SAST, some potential applications with simulation are presented. .

\end{itemize}
\subsection{Paper Outlines}\,\\
The paper is organized as follows: Section \ref{sec 1} is devoted for an overview of the pre-requisites
including the  ST and convolution structure of SAFT. In Section \ref{sec 2}, we introduce the notion of   special affine Stockwell transform (SAST) and investigate  its fundamental properties. Subsequently, a direct
 relationship between the recently introduced special affine scaled Wigner distribution and the proposed SAST is established in Section \ref{secrel}. In Section \ref{secucp} we establish the Heisenberg's uncertainty principle,
  the logarithmic uncertainty principle and the Nazarov's uncertainty principle  associated with the SAST. Finally, the potential applicability with simulations is briefly discussed
in Section \ref{secpa}.

\section{Preliminary}\label{sec 1}
In this section,  we recall the convolution structure of the special affine Fourier transform and give a brief  review to the the Stockwell transform and existing results required to follow the remaining content of the paper.\\
For notational convenience, we shall write the augmented matrix parameter ${\mathbf N}=\begin{bmatrix}
A & B ;& p\\
C & D ;& q
\end{bmatrix}$ as ${\mathbf N}=(A,B,C,D;p,q)$, in the rest of paper.
\begin{definition}\cite{15s}
For any pair of functions $f,g\in L^2(\mathbb R)$ the special affine convolution $\ast_{{\mathbf N}}$
with respect to the matrix  parameter $\mathbf N=$ is defined by
\begin{equation}\label{eqn qp conv}
(f\ast_{{\mathbf N}} g)(b)=K_B\int_{\mathbb R} f(t)g(b-t)e^{\frac{i}{2B}\left(Dp^2-2At(b-t)\right)}dt
\end{equation}
\end{definition}
Under this convolution, convolution theorem  in the SAFT domain reads
\begin{equation}\label{eqn qpft conv}
\mathbf S_{{\mathbf N}}[f\ast_{{\mathbf N}} g](w)=e^{\frac{i}{2b}\left(2w(Dp-Bq)-dw^2\right)}\mathbf S_{{\mathbf N}}[f](w)\mathbf S_{{\mathbf N}}[g](w)
\end{equation}

   Prior to introduction of ordinary Stockwell transform and its basic
properties, we shall recall three elementary operators dilation, modulation and translation in $L^2(\mathbb R)$ as  ${\mathcal D}_a,$ ${\mathcal M}_a$ and $\mathbf T_b$ where $a\in L^2(\mathbb R^+)$ and $b\in L^2(\mathbb R),$ respectively
\begin{equation*}
({\mathcal D}_a\Psi)(t)=|a|\Psi(at),\,({\mathcal M}_a\Psi)(t)=e^{iat}\Psi(t),\,({\mathbf T}_b\Psi)(t)=\Psi(t-b).
\end{equation*}
Using above elementary operators on  $\Psi\in L^2(\mathbb R)$, we define a family of analyzing functions as
\begin{equation}\label{fun psi}
\Psi_{a,b}(t)= {\mathcal M}_a{\mathbf T}_b {\mathcal D}_a\Psi(t)=|a|e^{iat}\Psi(a(t-b));\ a\in \mathbb R^+,\,b\in \mathbb R.
\end{equation}
With the help of above mentioned operators, Stocwell transform can be defined as
\begin{definition}[Stockwell transform \cite{st11}]For a given window function $\Psi\in L^2(\mathbb R),$ the Stockwell transform (ST) of a signal $f(t)\in L^2(\mathbb R)$ w.r.t $\Psi$ is defined as
\begin{eqnarray}
\label{st 1}[\mathcal S_{\Psi}f](a,b)&=&\frac{1}{\sqrt{2\pi}}\int_{\mathbb R}f(t)\overline{\Psi_{a,b}(t)}dt\\
\label{st2}&=&\frac{1}{\sqrt{2\pi}}\int_{\mathbb R}f(t)ae^{-iat}\overline{\Psi(a(t-b))}dt\\
\label{st3}&=&\frac{1}{\sqrt{2\pi}}\left([f(t)e^{-iat}]\ast[a\tilde\Psi(at)]\right)(b)\\
\label{st4}&=&\frac{1}{\sqrt{2\pi}}\left([{\mathcal M}_{-a}f]\ast[{\mathcal D}_a\tilde\Psi]\right)(b),
\end{eqnarray}
\end{definition}
where $\tilde\Psi(t)=\overline{\Psi}(-t)$ denotes the involution and $\ast$ represents the classical convolution in in the Fourier domain.\\
 The orthogonality relation and inversion formula   corresponding to Stockwell transform are given by
\begin{equation}\label{st ortho}
\left\langle \mathcal S_{\Psi}f,\mathcal S_{\Psi}g\right\rangle=\mathcal C_\Psi\left\langle f,g\right\rangle  ,\,\forall f,g \in L^2(\mathbb R),
\end{equation}
and
\begin{equation}\label{st inver}
f(t)=\frac{1}{\mathcal C_\Psi}\int_{\mathbb R}\int_{\mathbb R^+}[\mathcal S_{\Psi}f](a,b)\Psi_{a,b}(t)dadb,\, a.e \quad t\in \mathbb R.
\end{equation}
where $\mathcal C_\Psi$ denotes the admissibility condition.
\section{ Special Affine Stockwell transform and the Associated Constant Q-Property}\label{sec 2}
In this section, we shall formally introduce a novel transform in the context of time-frequency analysis
namely, special affine Stockwell transform (SAST) by intertwining the advantages of the SAFT and the Stockwell transform. The proposed SAST offers a joint information of time and special affine spectrum
in the time-SAFT domain.\\
\subsection{Definition of the SAST}\,\\
Motivated by the convolution theorem of special affine Fourier transform domain and by the basic properties of Stockwell transform, the  special affine Stockwell transform SAST of any signal $f(t)\in L^2(\mathbb R)$ w.r.t a given window function $\Psi(t)$ is defined as
\begin{eqnarray}
\nonumber[SAST^{{\mathbf N}}_\Psi f(t)](a,b)&=&\frac{1}{\sqrt{2\pi}}\left[({\mathcal M}_{-a}f)\ast_{{\mathbf N}}({\mathcal D}_a\tilde\Psi) \right](b)\\
\nonumber&=&\frac{1}{\sqrt{2\pi}}\left[e^{-iat}f(t)\ast_{{\mathbf N}} a\tilde\Psi(at)\right](b)\\
\nonumber&=&\frac{1}{\sqrt{2\pi}}K_Be^{i\frac{D}{2B}p^2}\big\langle e^{-ia(.)}f(.)e^{-i\frac{A}{B}(.)(b-.)},\Psi_{a,b}(.)\big\rangle\\
\nonumber&=&\frac{1}{\sqrt{2\pi}}\left\langle f(t),{\Psi_{\mathbf N,a,b}(t)}\right\rangle\\
\label{eqn qpst}&=&\frac{1}{\sqrt{2\pi}}\int_{\mathbb R} f(t)\overline{\Psi_{\mathbf N,a,b}(t)}dt,
\end{eqnarray}
where the family of functions $\Psi_{{{\mathbf N}},a,b}(t)$ satisfies
\begin{equation}\label{fun psi mu}
\Psi_{{{\mathbf N}},a,b}(t)=|a|K^*_B\Psi(a(t-b))e^{iat+\frac{i}{2B}(2At(b-t)-Dp^2)}.\\
\end{equation}
With the virtue of above equations we have following definition
\begin{definition}\label{def qpst} For a given matrix parameter ${{\mathbf N}}=(A,B,C,D;p,q),$ and a given  window function $\Psi\in L^2(\mathbb R),$  the special affine Stockwell transform of any signal $f\in L^2(\mathbb R)$ is defined by
\begin{equation}\label{eqn def qpst}
[SAST^{{\mathbf N}}_\Psi f(t)](a,b)=\frac{1}{\sqrt{2\pi}}\int_{\mathbb R} f(t)\overline{\Psi_{{{\mathbf N}},a,b}(t)}dt,
\end{equation}
where $\Psi_{{{\mathbf N}},a,b}(t)$ is given by (\ref{fun psi mu}).
\end{definition}

Equation, (\ref{eqn def qpst}) takes form
\begin{equation}\label{eqn qpst2}
[SAST^{{\mathbf N}}_\Psi f(t)](a,b)=\frac{|a|}{\sqrt{2\pi}}K_B\int_{\mathbb R} f(t)\overline{\Psi(a(t-b))}e^{-iat-\frac{i}{2B}(2At(b-t)-Dp^2)}dt.
\end{equation}
Further, (\ref{eqn qpst2}) can be rewritten as
\begin{equation}\label{eqn qpst2a}
[SAST^{{\mathbf N}}_\Psi f(t)](a,b)=
\frac{|a|}{\sqrt{2\pi}}K_Be^{\frac{i}{2B}Dp^2}\int_{\mathbb R} f(t)e^{\frac{-i}{B}At(b-t)}\overline{\Psi(a(t-b))}e^{-iat}dt.
\end{equation}
From (\ref{eqn qpst2a}), we observe
that the computation of the special affine Stockwell transform performs the following
three actions in unison:\\
\begin{itemize}
\item A product by a chirp signal, i.e., $f(t)\rightarrow \tilde f(t)=f(t)e^{-i\frac{A}{B}t(b-t)}.$\\

\item The Stockwell transform, i.e., $\tilde f(t)\rightarrow [\mathcal S_\Psi\tilde f](a,b).$\\
\item  Another multiplication by chirp signal, i.e, $[\mathcal S_\Psi\tilde f](a,b)=[SAST^{{\mathbf N}}_\Psi f(t)](a,b)=K_Be^{\frac{i}{2B}Dp^2}[\mathcal S_\Psi\tilde f](a,b).$
\end{itemize}
 Thus the computational load of the quadratic-phase Stockwell transform is essentially
dictated by the classical Stockwell transform. Moreover, the aforementioned scheme is depicted in Figure \ref{fig 1}.\\

%

As a consequence of Definition \ref{def qpst}, we have the following deductions

\begin{itemize}
\item For the  matrix parameter ${{\mathbf N}}=(A,B,C,D;0,0),$ Definition \ref{def qpst} boils down to the linear canonical Stockwell transform \cite{st30}
   \begin{equation*}
[LCST^{{\mathbf N}}_\Psi f(t)](a,b)=\frac{|a|}{\sqrt{2\pi}}\int_{\mathbb R} f(t)\overline{\Psi(a(t-b))}e^{-it\left(a+(b-t)\frac{A}{B}\right)}dt.
\end{equation*}
\item  For the matrix parameter ${{\mathbf N}}=(\cos\theta,\sin\theta,-\sin\theta,\cos\theta;0,0),\quad \theta\ne n\pi$, we can obtain a  novel fractional Stockwell transform
   \begin{equation*}
[FrST^{{\mathbf N}}_\Psi f(t)](a,b)=\frac{|a|}{\sqrt{2\pi}}\int_{\mathbb R} f(t)\overline{\Psi(a(t-b))}e^{-it\left(a+(b-t)\cot\theta\right)}dt.
\end{equation*}
 \item For the matrix parameter ${{\mathbf N}}=(1,B,0,1;0,0),\,B\ne 0,$ we can obtain a  novel Fresnel-Stockwell
transform given by
 \begin{equation*}
[SAST^{{\mathbf N}}_\Psi f(t)](a,b)=\frac{|a|}{\sqrt{2\pi}}\int_{\mathbb R} f(t)\overline{\Psi(a(t-b))}e^{-it\left(a+(b-t)\right)}dt.
\end{equation*}
\item For the matrix ${{\mathbf N}}=(0,1,-1,0;0,0),$ Definition \ref{def qpst} reduces to the classical Stockwell transform (\ref{st2}).\\
 \end{itemize}

\begin{lemma}\label{lem 1}Let  $\Psi\in L^2(\mathbb R)$ be a window function, then special affine Stockwell transform of $\Psi_{{{\mathbf N}},a,b}$ is given by
\begin{eqnarray}\label{qpft of psi}
\nonumber\mathbf S_{{\mathbf N}}\left[\Psi_{{{\mathbf N}},a,b}\right](w)
&=&K^*_BK^{-1}_B{e^{iab}}\mathcal K_{{\mathbf N}}(b,w)e^{\frac{i}{2B}\left(2\frac{w}{a}(Dp-Bq)-D\left(\frac{w^2}{a^2}-2p^2\right)\right)}\\
\nonumber&&\qquad\qquad\qquad\times\mathbf S_{{\mathbf N}}\left[ e^{i\left\{y\left(1-\left(1-\frac{1}{a}\right)\frac{p}{B}\right)-\frac{Ay^2}{2B}\left(1+\frac{1}{a^2}\right)\right\}}\Psi(y)\right]\left(\frac{w}{a}\right),\\
\end{eqnarray}
where $\Psi_{{{\mathbf N}},a,b}(t)$ is given by (\ref{fun psi mu}).
\end{lemma}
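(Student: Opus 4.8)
The plan is to compute the defining integral $\mathbf S_{\mathbf N}[\Psi_{\mathbf N,a,b}](w)=\int_{\mathbb R}\Psi_{\mathbf N,a,b}(t)\,\mathcal K_{\mathbf N}(t,w)\,dt$ directly. First I would substitute the closed forms (\ref{fun psi mu}) of the window family and (\ref{ker qpft}) of the SAFT kernel; the product $\Psi_{\mathbf N,a,b}(t)\,\mathcal K_{\mathbf N}(t,w)$ is then $|a|\,K^*_B K_B\,\Psi\bigl(a(t-b)\bigr)$ times the exponential of a single quadratic phase in $t$. Collecting the $t^2$, $t^1$ and $t^0$ parts of that phase, the $t^2$-coefficient simplifies because the $\tfrac{A}{2B}t^2$ coming from the kernel combines with the $-\tfrac{A}{B}t^2$ produced by the $\tfrac{A}{B}t(b-t)$ term in (\ref{fun psi mu}) to leave just $-\tfrac{A}{2B}t^2$.

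Next I would run the dilation substitution $y=a(t-b)$, i.e.\ $t=b+y/a$, $dt=dy/a$; since $a\in\mathbb R^+$ the prefactor $|a|$ cancels the Jacobian $1/a$, and the integrand becomes $K^*_B K_B\,\Psi(y)\,e^{\,i(\alpha y^2+\beta y+\gamma)}$ with $\alpha,\beta,\gamma$ explicit in $(a,b,w,\mathbf N)$. Factoring out $e^{i\gamma}$ leaves $\int_{\mathbb R}\Psi(y)\,e^{i(\alpha y^2+\beta y)}\,dy$. The key point is that $\alpha y^2+\beta y$ differs from the quadratic-plus-linear part $\tfrac{A}{2B}y^2+\tfrac1B y\bigl(p-\tfrac wa\bigr)$ of the SAFT kernel $\mathcal K_{\mathbf N}\!\left(y,\tfrac wa\right)$ precisely by the modulating phase $y\bigl(1-(1-\tfrac1a)\tfrac pB\bigr)-\tfrac{Ay^2}{2B}\bigl(1+\tfrac1{a^2}\bigr)$ appearing in the statement — one checks this by reading off $\alpha=-\tfrac{A}{2Ba^2}$ and $\beta=1+\tfrac{p-w}{aB}$ from the substitution. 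Hence, multiplying and dividing by $K_B$ and by the $y$-free part $e^{\frac{i}{2B}\left(-2\frac wa(Dp-Bq)+D(\frac{w^2}{a^2}+p^2)\right)}$ of that kernel, the integral equals $K_B^{-1}$ times that chirp times $\mathbf S_{\mathbf N}\!\bigl[e^{i\{\cdots\}}\Psi\bigr]\!\left(\tfrac wa\right)$.

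It remains to reassemble the bookkeeping: I would rewrite the constant $\gamma$ using $K_B^{-1}\mathcal K_{\mathbf N}(b,w)$ — verifying that $\gamma$ minus the exponent of $K_B^{-1}\mathcal K_{\mathbf N}(b,w)$ collapses to $ab-\tfrac{D}{2B}p^2$ — so that $e^{i\gamma}$ contributes the promised $e^{iab}\,\mathcal K_{\mathbf N}(b,w)$ up to a $K_B^{-1}$; then combine the residual $e^{-\frac{iD}{2B}p^2}$ with the kernel's $y$-free chirp in $w/a$ to obtain the displayed exponential $e^{\frac{i}{2B}\left(2\frac wa(Dp-Bq)-D(\cdots)\right)}$; and finally match the constants $K_B$, observing that $K^*_B K_B$ on the left equals $K^*_B K_B^{-1}$ times the $K_B$ hidden inside $\mathcal K_{\mathbf N}(b,w)$ times the $K_B$ hidden inside $\mathbf S_{\mathbf N}[\cdot](w/a)$. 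This produces (\ref{qpft of psi}).

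No analytic difficulty arises (for $\Psi$ Schwartz everything is a finite algebraic manipulation, and the general $L^2$ case follows by density); the only real obstacle is organizational — keeping the many chirp factors straight and, in particular, verifying that after the substitution $y=a(t-b)$ the leftover quadratic and linear terms in $y$ match the prescribed modulation of $\Psi$ exactly, with the $b$-dependent cross terms (such as the $-\tfrac{Ab}{aB}y$ contributions) cancelling as they must.
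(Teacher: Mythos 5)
Your proposal is correct and takes essentially the same route as the paper's own proof: substitute the explicit forms of $\Psi_{{\mathbf N},a,b}$ and the SAFT kernel, change variables $y=a(t-b)$, and then multiply and divide by $\mathcal K_{{\mathbf N}}\left(y,\tfrac{w}{a}\right)$ to recognize the remaining integral as the SAFT of the chirp-modulated window at $\tfrac{w}{a}$, with your coefficients $\alpha=-\tfrac{A}{2Ba^{2}}$ and $\beta=1+\tfrac{p-w}{aB}$ agreeing with the paper's intermediate lines. One minor remark: combining the residual $e^{-\frac{iD}{2B}p^{2}}$ with the kernel's $y$-free chirp actually produces $-D\left(\frac{w^{2}}{a^{2}}+2p^{2}\right)$ rather than the displayed $-D\left(\frac{w^{2}}{a^{2}}-2p^{2}\right)$, a sign slip already present in the last line of the paper's own derivation, so your bookkeeping matches the paper's computation up to that typo.
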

\begin{proof}
From (\ref{qpft}), we have
\begin{eqnarray*}
\nonumber&&\mathbf S_{{\mathbf N}}\left[\Psi_{{{\mathbf N}},a,b}\right](w)\\
\nonumber&&\qquad={|a|}K_BK^*_B\int_{\mathbb R} \Psi(a(t-b))e^{iat+\frac{i}{2B}(2At(b-t)-Dp^2)}e^{\frac{i}{2B}(At^2+2t(p-w)-2w(Dp-Bq)+D(w^2+p^2))}dt\\
\nonumber&&\qquad=K_BK^*_B\int_{\mathbb R} \Psi(y)e^{\left\{ia\left(b+\frac{y}{a}\right)+\frac{i}{2B}\left[2A\left(b+\frac{y}{a}\right)\left(\frac{-y}{a}\right)-Dp^2\right]\right\}}\\
\nonumber&&\qquad\qquad\qquad\qquad\times e^{\frac{i}{2B}\left\{A\left(b+\frac{y}{a}\right)^2+2\left(b+\frac{y}{a}\right)(p-w)-2w(Dp-Bq)+D(w^2+p^2)\right\}}dy\\
\nonumber&&\qquad={e^{iab}}K_BK^*_B\int_{\mathbb R} \Psi(y)e^{\left\{iy-i\frac{A}{2B}\frac{y^2}{a^2}\right\}}e^{\frac{i}{2B}\left\{Ab^2+2b(p-w)+2\frac{y}{a}(p-w)-2w(Dp-Bq)+Dw^2\right\}}dy\\
\nonumber&&\qquad={e^{iab}}K_Be^{\frac{i}{2B}\left(Ab^2+2b(p-w)-2w(Dp-Bq)+D(w^2+p^2)\right)}K^*_B\int_{\mathbb R} \Psi(y)e^{\left\{iy-i\frac{A}{2B}\frac{y^2}{a^2}\right\}}\\
\nonumber&&\qquad\qquad\qquad\qquad\times e^{\frac{i}{2B}\left\{2\frac{y}{a}(p-w)-Dp^2\right\}}dy\\
\nonumber&&\qquad={e^{iab}}\mathcal K_{{\mathbf N}}(b,w)K^*_B\int_{\mathbb R} \Psi(y)e^{\left\{iy-i\frac{A}{2B}\frac{y^2}{a^2}\right\}}e^{\frac{i}{2B}\left\{Ay^2+2y(p-\frac{w}{a})-2\frac{w}{a}(Dp-Bq)+D(\frac{w^2}{a^2}+p^2)\right\}}\\
\nonumber&&\qquad\qquad\qquad\qquad\times e^{\frac{i}{2B}\left\{2\frac{y}{a}(p-w)-Dp^2-Ay^2-2y(p-\frac{w}{a})+2\frac{w}{a}(Dp-Bq)-D(\frac{w^2}{a^2}+p^2)\right\}}dy\\
\nonumber&&\qquad=K^*_BK^{-1}_B{e^{iab}}\mathcal K_{{\mathbf N}}(b,w)e^{\frac{i}{2B}\left(2\frac{w}{a}(Dp-Bq)-D\left(\frac{w^2}{a^2}-2p^2\right)\right)}\\
\nonumber&&\qquad\qquad\qquad\qquad\times\int_{\mathbb R} \Psi(y)e^{i\left\{y\left(1-\left(1-\frac{1}{a}\right)\frac{p}{B}\right)-\frac{Ay^2}{2B}\left(1+\frac{1}{a^2}\right)\right\}}\mathcal K_{{\mathbf N}}\left(y,\frac{w}{a}\right),
 dy\\
 \nonumber&&\qquad=K^*_BK^{-1}_B{e^{iab}}\mathcal K_{{\mathbf N}}(b,w)e^{\frac{i}{2B}\left(2\frac{w}{a}(Dp-Bq)-D\left(\frac{w^2}{a^2}-2p^2\right)\right)}\mathbf S_{{\mathbf N}}\left[ \Psi(y)e^{i\left\{y\left(1-\left(1-\frac{1}{a}\right)\frac{p}{B}\right)-\frac{Ay^2}{2B}\left(1+\frac{1}{a^2}\right)\right\}}\right]\left(\frac{w}{a}\right),
\end{eqnarray*}
which completes proof.\\
\end{proof}

Next, we shall present some illustrative examples for the demonstration of the SAST (\ref{eqn def qpst}). For this we shall compute the SAST of some functions with respect to the Gaussian window function  $\Psi(t)=e^{-\pi t^2}$.
\begin{example}\label{exa}
(a)Consider the Gaussian function $f(t)=e^{-\sigma t^2},\,K>0$, then the  SAST of $f(t)$ is  given by
\begin{eqnarray}
\nonumber[SAST^{{\mathbf N}}_\Psi f(t)](a,b)&=&\frac{|a|}{\sqrt{2\pi}}K_B\int_{\mathbb R} f(t)\overline{\Psi(a(t-b))}e^{-iat-\frac{i}{2B}(2At(b-t)-Dp^2)}dt\\
\nonumber&=&\frac{|a|}{\sqrt{2\pi}}K_B\int_{\mathbb R}e^{-\sigma t^2} e^{-\pi a^2(t-b)^2}e^{-iat-\frac{i}{2B}(2At(b-t)-Dp^2)}dt\\
\nonumber&=&\frac{|a|e^{-\pi a^2b^2+i\frac{Dp^2}{2B}}}{\sqrt{2\pi}}K_B\int_{\mathbb R} e^{-(\pi a^2+\sigma-iA/B)t^2+(2\pi a^2 b-ia-iAb/B)t}dt\\
\label{exa11}&=&\frac{|a|e^{-\pi a^2b^2+i\frac{Dp^2}{2B}}}{2\sqrt{\pi A+i\sigma\pi B+iB\pi^2 a^2}}e^{\frac{B(2\pi a^2b-ia-iAb/B)^2}{4(\pi a^2B+\sigma B-iA)}}.
\end{eqnarray}
(b) For a  modulation   function  $f(t)=e^{iKt},\,K>0$, the SAST  is computed as
\begin{eqnarray}
\nonumber[SAST^{{\mathbf N}}_\Psi f(t)](a,b)&=&\frac{|a|}{\sqrt{2\pi}}K_B\int_{\mathbb R} f(t)\overline{\Psi(a(t-b))}e^{-iat-\frac{i}{2B}(2At(b-t)-Dp^2)}dt\\
\nonumber&=&\frac{|a|}{\sqrt{2\pi}}K_B\int_{\mathbb R}e^{iKt} e^{-\pi a^2(t-b)^2}e^{-iat-\frac{i}{2B}(2At(b-t)-Dp^2)}dt\\
\nonumber&=&\frac{|a|e^{-\pi a^2b^2+i\frac{Dp^2}{2B}}}{\sqrt{2\pi}}K_B\int_{\mathbb R} e^{-(\pi a^2-iA/B)t^2+(iK+2\pi a^2 b-ia-iAb/B)t}dt\\
\label{exa11}&=&\frac{|a|e^{-\pi a^2b^2+i\frac{Dp^2}{2B}}}{2\sqrt{\pi A+iB\pi^2 a^2}}e^{\frac{B(iK+2\pi a^2b-ia-iAb/B)^2}{4(\pi a^2B-iA)}}.
\end{eqnarray}
(c) For a constant function $f(t)=M,$, the SAST of $f(t)$  can be computed as
\begin{eqnarray}
\nonumber[SAST^{{\mathbf N}}_\Psi f(t)](a,b)&=&\frac{|a|}{\sqrt{2\pi}}K_B\int_{\mathbb R} M\overline{\Psi(a(t-b))}e^{-iat-\frac{i}{2B}(2At(b-t)-Dp^2)}dt\\
\label{exa11}&=&\frac{M|a|e^{-\pi a^2b^2+i\frac{Dp^2}{2B}}}{2\sqrt{\pi A+iB\pi^2 a^2}}e^{\frac{B(2\pi a^2b-ia-iAb/B)^2}{4(\pi a^2B-iA)}}.
\end{eqnarray}
\end{example}
For different choices of parameter ${{\mathbf N}}=(A,B,C,D,E),$ the corresponding
 special affine Stockwell transforms of  (\ref{exa11}) and (\ref{exa12}) are plotted in Figures \ref{fig 2}-\ref{fig 8}.

  In next subsection, our goal is to analyze the joint resolution of the special affine Stockwell
transform, in the time-special affine-frequency domain. In this direction, we present the  following Proposition.

\begin{proposition}\label{prop 1} Let $\mathbf S_{{\mathbf N}}[f]$ and $[SAST^{{\mathbf N}}_\Psi f(t)](a,b)$ denote the special affine Fourier transform and special affine Stockwell transform of any signal $f$ in $L^2(\mathbb R),$ respectively. Then we have
\begin{eqnarray}\label{prop 1 eqn}
\nonumber[SAST^{{\mathbf N}}_\Psi f(t)](a,b)&=&\frac{{e^{-iab}}K_B}{\sqrt{2\pi}K^*_B}\int_{\mathbb R}e^{\frac{-i}{2B}\left(2\frac{w}{a}(Dp-Bq)-D\left(\frac{w^2}{a^2}-2p^2 \right)\right)} \mathbf S_{{\mathbf N}}\left[f(t)\right](w)\\
\nonumber&&\qquad\qquad\qquad\times\overline{\mathbf S_{{\mathbf N}}\left[ e^{i\left\{y\left(1-\left(1-\frac{1}{a}\right)\frac{p}{B}\right)-\frac{Ay^2}{2B}\left(1+\frac{1}{a^2}\right)\right\}}\Psi(y)\right]}\left(\frac{w}{a}\right)\overline{\mathcal K_{{\mathbf N}}(b,w)}dw,\\
\end{eqnarray}
where ${K_{{\mathbf N}}(b,w)}$ is given by (\ref{ker qpft}).
\end{proposition}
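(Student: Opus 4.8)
The plan is to obtain \eqref{prop 1 eqn} directly by transplanting the defining inner product of the SAST onto the special-affine-frequency side via the Parseval relation \eqref{3}, and then invoking Lemma \ref{lem 1} to replace $\mathbf S_{{\mathbf N}}[\Psi_{{{\mathbf N}},a,b}]$ by its closed form. Concretely, I would start from the inner-product writing of the SAST that appears in the chain of equalities leading to Definition \ref{def qpst}, namely $[SAST^{{\mathbf N}}_\Psi f(t)](a,b)=\frac{1}{\sqrt{2\pi}}\langle f,\Psi_{{{\mathbf N}},a,b}\rangle$, and apply \eqref{3} to the right-hand slot to get $\frac{1}{\sqrt{2\pi}}\langle \mathbf S_{{\mathbf N}}[f],\mathbf S_{{\mathbf N}}[\Psi_{{{\mathbf N}},a,b}]\rangle$. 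Since $\Psi\in L^2(\mathbb R)$ forces $\Psi_{{{\mathbf N}},a,b}\in L^2(\mathbb R)$, the pairing is legitimate; writing it as $\langle F,G\rangle=\int_{\mathbb R}F(w)\overline{G(w)}\,dw$ already produces an integral of the shape in \eqref{prop 1 eqn} (to pass to the pointwise integral one may restrict to $\Psi$ for which $\mathbf S_{{\mathbf N}}[\Psi_{{{\mathbf N}},a,b}]\in L^1$ and extend by density if desired).

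The remaining work is purely algebraic bookkeeping. I would substitute the expression \eqref{qpft of psi} for $\mathbf S_{{\mathbf N}}[\Psi_{{{\mathbf N}},a,b}](w)$ into the conjugated slot and distribute the bar over the product. Using that $B$ is real, so $\overline{e^{\frac{i}{2B}(\,\cdot\,)}}=e^{-\frac{i}{2B}(\,\cdot\,)}$; that $\overline{K^*_B}=K_B$ and $\overline{K^{-1}_B}=(K^*_B)^{-1}$, hence $\overline{K^*_BK^{-1}_B}=K_B(K^*_B)^{-1}$; that $\overline{e^{iab}}=e^{-iab}$; and leaving $\overline{\mathcal K_{{\mathbf N}}(b,w)}$ untouched since it is not further simplified, the scalar constants collect to exactly $\frac{e^{-iab}K_B}{\sqrt{2\pi}K^*_B}$, the chirp factor becomes $e^{\frac{-i}{2B}(2\frac{w}{a}(Dp-Bq)-D(\frac{w^2}{a^2}-2p^2))}$, and the window term becomes $\overline{\mathbf S_{{\mathbf N}}[\,e^{i\{y(1-(1-\frac1a)\frac pB)-\frac{Ay^2}{2B}(1+\frac1{a^2})\}}\Psi(y)\,]}(w/a)$, giving precisely \eqref{prop 1 eqn}.

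I do not expect a genuine obstacle. The only points demanding care are (i) applying \eqref{3} in the correct slot and with the stated normalization so that no stray $\sqrt{2\pi}$ or $K_B$ survives, and (ii) keeping every complex conjugate consistent, in particular verifying $\overline{K^*_BK^{-1}_B}=K_B(K^*_B)^{-1}$ and that all chirp exponents flip sign uniformly. As a sanity check I would specialise to ${{\mathbf N}}=(0,1,-1,0;0,0)$ and confirm that \eqref{prop 1 eqn} reduces to the familiar frequency-domain form of the classical Stockwell transform, i.e.\ a pairing of $\widehat f$ with the dilated, modulated window up to the $e^{-iab}$ factor and the $1/\sqrt{2\pi}$ normalization.
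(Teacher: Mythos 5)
Your proposal is correct and follows essentially the same route as the paper: the paper's proof likewise writes the SAST as $\tfrac{1}{\sqrt{2\pi}}\langle f,\Psi_{{\mathbf N},a,b}\rangle$, applies the SAFT Parseval identity \eqref{3}, and then substitutes the expression for $\mathbf S_{{\mathbf N}}[\Psi_{{\mathbf N},a,b}]$ from Lemma \ref{lem 1}, with the conjugation bookkeeping you describe yielding the prefactor $e^{-iab}K_B/K^*_B$ and the sign-flipped chirp.
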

\begin{proof}

By the virtue of (\ref{eqn qpst}) and  Parseval identity of SAFT, we have
\begin{eqnarray*}
\nonumber[SAST^{{\mathbf N}}_\Psi f(t)](a,b)&=&\frac{1}{\sqrt{2\pi}}\left\langle f(t),{\Psi_{{{\mathbf N}},a,b}(t)}\right\rangle\\
\nonumber&=&\frac{1}{\sqrt{2\pi}}\left\langle \mathbf S_{{\mathbf N}}\left[f(t)\right](w), \mathbf S_{{\mathbf N}}\left[\Psi_{{{\mathbf N}},a,b}(t)\right](w)\right\rangle\\
\nonumber&=&\frac{1}{\sqrt{2\pi}}\int_{\mathbb R} \mathbf S_{{\mathbf N}}\left[f(t)\right](w)\overline{ \mathbf S_{{\mathbf N}}\left[\Psi_{{{\mathbf N}},a,b}(t)\right]}(w)dw.
\end{eqnarray*}

Using Lemma \ref{lem 1}, above equation yields
\begin{eqnarray*}
[SAST^{{\mathbf N}}_\Psi f(t)](a,b)&=&\frac{{e^{-iab}}K_B}{\sqrt{2\pi}K^*_B}\int_{\mathbb R}e^{\frac{-i}{2B}\left(2\frac{w}{a}(Dp-Bq)-D\left(\frac{w^2}{a^2}-2p^2 \right)\right)} \mathbf S_{{\mathbf N}}\left[f(t)\right](w)\\
&&\qquad\qquad\qquad\times\overline{\mathbf S_{{\mathbf N}}\left[ e^{i\left\{y\left(1-\left(1-\frac{1}{a}\right)\frac{p}{B}\right)-\frac{Ay^2}{2B}\left(1+\frac{1}{a^2}\right)\right\}}\Psi(y)\right]}\left(\frac{w}{a}\right)\overline{\mathcal K_{{\mathbf N}}(b,w)}dw.\\
\end{eqnarray*}
Which completes the proof.\\
\end{proof}

From (\ref{eqn qpst}) and (\ref{prop 1 eqn}), we  conclude that if the analyzing functions $\Psi_{{{\mathbf N}},a,b}(t)$  are supported in the time domain or SAFT domain , then the inner product $\left\langle f(t),\Psi_{{{\mathbf N}},a,b}(t)\right\rangle$ ensures that the proposed transform $[SAST^{{\mathbf N}} f(t)](a,b)$ is accordingly supported in the respective domains. This implies that the special affine Stockwell transform is capable of providing the simultaneous information of the
time and the special affine frequency in the time-frequency domain. Assuming that  $\Psi(t)$ is the window with with finite centre $\mathbf E_\Psi$ and finite radius $\Delta_\Psi$ in the time domain. Then the centre and radii of the time-domain window function $\Psi_{{{\mathbf N}},a,b}(t)$ of the proposed SAST is given by
\begin{equation}\label{tfd 1}
\mathbf{E}\left[\Psi_{{{\mathbf N}},a,b}(t)\right]=\dfrac{\displaystyle\int_{\mathbb R}t|\Psi_{{{\mathbf N}},a,b}(t)|^2dt}{\displaystyle\int_{\mathbb R}|\Psi_{{{\mathbf N}},a,b}(t)|^2dt}=\dfrac{\displaystyle\int_{\mathbb R}t|\Psi_{a,b}(t)|^2dt}{\displaystyle\int_{\mathbb R}|\Psi_{a,b}(t)|^2dt}=\mathbf{E}\left[\Psi_{a,b}(t)\right]=b+\frac{\mathbf{E}_\Psi}{a}\\
\end{equation}

and

\begin{eqnarray}
\nonumber\Delta\left[\Psi_{{{\mathbf N}},a,b}(t)\right]&=&\left(\dfrac{\displaystyle\int_{\mathbb R}\left(t-b-\frac{\mathbf{E}_\Psi}{a}\right)|\Psi_{{{\mathbf N}},a,b}(t)|^2dt}{\displaystyle\int_{\mathbb R}|\Psi_{{{\mathbf N}},a,b}(t)|^2dt}\right)^{\frac{1}{2}}\\
\nonumber&=&\left(\dfrac{\displaystyle\int_{\mathbb R}\left(t-b-\frac{\mathbf{E}_\Psi}{a}\right)|\Psi_{a,b}(t)|^2dt}{\displaystyle\int_{\mathbb R}|\Psi_{a,b}(t)|^2dt}\right)^{\frac{1}{2}}\\
\label{tfd 2}&=&\Delta\left[\Psi_{a,b}(t)\right]=\frac{\Delta_\Psi}{a}
\end{eqnarray}
Let $\mathbf{H}(w)$  be the window function in the SAFT domain given by
\begin{equation*}
\mathbf{H}(w)={\mathbf S_{{\mathbf N}}\left[ e^{i\left\{y\left(1-\left(1-\frac{1}{a}\right)\frac{p}{B}\right)-\frac{Ay^2}{2B}\left(1+\frac{1}{a^2}\right)\right\}}\Psi(y)\right]}\left({w}\right).
\end{equation*}
Then, we can find the center and radius of the SAFT domain window function
\begin{equation}\label{3.10}
\mathbf{H}\left(\frac{w}{a}\right)={\mathbf S_{{\mathbf N}}\left[ e^{i\left\{y\left(1-\left(1-\frac{1}{a}\right)\frac{p}{B}\right)-\frac{Ay^2}{2B}\left(1+\frac{1}{a^2}\right)\right\}}\Psi(y)\right]}\left(\frac{w}{a}\right)
\end{equation}

appearing in Eq.(\ref{prop 1 eqn}) as

\begin{equation}\label{3.11}
\mathbf{E}\left[\mathbf{H}\left(\frac{w}{a}\right)\right]= a\mathbf{E}_{\mathbf{H}},\qquad \Delta\left[\mathbf{H}\left(\frac{w}{a}\right)\right]= a\Delta_{\mathbf{H}}
\end{equation}
Hence,  the Q-factor(the ratio of width to center) of the special affine Fourier  domain function of the proposed transform is given by
\begin{equation}\label{3.12}
Q=\frac{\Delta\left[\mathbf{H}\left(\frac{w}{a}\right)\right]}{\mathbf{E}\left[\mathbf{H}\left(\frac{w}{a}\right)\right]}=\frac{\Delta_{\mathbf{H}}}{\mathbf{E}_{\mathbf{H}}}=constant
\end{equation}
Eq. (\ref{3.12}) illustrates that the window function is independent of both the of the matrix ${{\mathbf N}}=(A,B,C,D;p,q)$ and the scaling parameter $a$. This shows that the proposed SAST has  the constant-Q property. It should be noted here that the constant-Q transform plays a vital role in the  signal processing and are mostly meant for audio signals  as they require an analysis as well as a synthesis scheme \cite{st50,st51}.\\

Thus, the $[SAST^{{\mathbf N}}_{\Psi}f(t)](a,b)$ gives local information of the signal $f(t)$ with the time window as

\begin{equation}\label{3.13}
\left[b+\frac{\mathbf{E}_\Psi}{a}-\frac{\Delta_\Psi}{a},\,b+\frac{\mathbf{E}_\Psi}{a}+\frac{\Delta_\Psi}{a}\right].
\end{equation}
Similarly, we can infer from (\ref{3.11}) and Proposition \ref{prop 1} that the proposed transform $[SAST^{{\mathbf N}}_{\Psi}f(t)](a,b)$
 gives local information about the SAFT spectrum of the signal $f(t)$ in the window
\begin{equation}\label{3.14}
\left[a\mathbf{E}_{\mathbf{H}}-a{\Delta_{\mathbf{H}}},\, a\mathbf{E}_{\mathbf{H}}+a{\Delta_{\mathbf{H}}}\right].
\end{equation}

Hence, the the joint resolution of the proposed SAST $[SAST^{{\mathbf N}}_{\Psi}f(t)](a,b)$ in the time-frequency domain is described by a
flexible window having a total spread $4\Delta_{\mathbf{H}}\Delta_\Psi$ and
is given by
\begin{equation}\label{3.15}
\left[b+\frac{\mathbf{E}_\Psi}{a}-\frac{\Delta_\Psi}{a},\,b+\frac{\mathbf{E}_\Psi}{a}+\frac{\Delta_\Psi}{a}\right]\times\left[a\mathbf{E}_{\mathbf{H}}-a{\Delta_{\mathbf{H}}},\, a\mathbf{E}_{\mathbf{H}}+a{\Delta_{\mathbf{H}}}\right]=4\Delta_{\mathbf{H}}\Delta_\Psi.
\end{equation}
\subsection{Basic properties of the special affine Stockwell transform}
\begin{theorem}\label{thm p1}Let $\Psi,\Phi\in L^2(\mathbb R)$ be two window functions and $f,g$ be two functions in $L^2(\mathbb R)$. Then the proposed special affine Stockwell transform satisfies the
following properties:\\
(i)Linearity: $$\left[SAST^{{\mathbf N}}_\Psi(\alpha f+\beta g)\right](a,b)= \alpha\left[SAST^{{\mathbf N}}_\Psi f\right](a,b)+\beta\left[SAST^{{\mathbf N}}_\Psi g\right](a,b).$$\\
(ii)Anti-linearity: $$\left[SAST^{{\mathbf N}}_{\alpha\Psi+\beta\Phi}f\right](a,b)= \bar\alpha\left[SAST^{{\mathbf N}}_\Psi f\right](a,b)+\bar\beta\left[SAST^{{\mathbf N}}_\Phi f\right](a,b).$$\\
(iii)Translation: $$\left[SAST^{{\mathbf N}}_\Psi({\mathbf T}_k f)\right](a,b)=e^{-iak-\frac{i}{2B}\left(2Ak(b-k)\right)}\left[SAST^{{\mathbf N}}_\Psi({\mathcal M}_{\frac{Ak}{B}} f)\right](a,b-k).$$\\
(iv)Scaling: $$\left[SAST^{{\mathbf N}}_\Psi( f(\sigma t))\right](a,b)=\left[SAST^{\bar{{\mathbf N}}}_\Psi f\right]\left(\frac{a}{\sigma}, \sigma b\right),\quad \bar{{\mathbf N}}=(A,\sigma^2B,C,\sigma^2D;p,q)$$\\
(v)Parity:$$\left[SAST^{{\mathbf N}}_\Psi( f(- t))\right](a,b)=\left[SAST^{{\mathbf N}}_\Psi f\right](-a,-b).$$ \\
Where $\alpha,\beta\in \mathbb C,$  $k\in\mathbb R $ and $\sigma\in\mathbb R ^+.$
\end{theorem}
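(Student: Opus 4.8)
The plan is to derive (i) and (ii) directly from the integral representation \eqref{eqn qpst2} and to obtain (iii)--(v) by elementary changes of variable in that same integral, the only real bookkeeping being the reorganization of the quadratic chirp $e^{-\frac{i}{2B}(2At(b-t)-Dp^2)}$. For (i), the map $f\mapsto\frac{|a|}{\sqrt{2\pi}}K_B\int_{\mathbb R}f(t)\overline{\Psi(a(t-b))}e^{-iat-\frac{i}{2B}(2At(b-t)-Dp^2)}\,dt$ is a linear functional of $f$, so linearity is immediate. For (ii), since $\overline{(\alpha\Psi+\beta\Phi)(a(t-b))}=\bar\alpha\,\overline{\Psi(a(t-b))}+\bar\beta\,\overline{\Phi(a(t-b))}$ while every remaining factor in \eqref{eqn qpst2} is independent of the window, splitting the integral gives anti-linearity.

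For (iii), I would substitute $\mathbf T_k f(t)=f(t-k)$ into \eqref{eqn qpst2} and put $u=t-k$. The window becomes $\overline{\Psi(a(u-(b-k)))}$, i.e.\ the analyzing window at the shifted point $b-k$; the linear term factors as $e^{-ia(u+k)}=e^{-iak}e^{-iau}$; and expanding $2A(u+k)\big((b-k)-u\big)=2Au\big((b-k)-u\big)+2Ak(b-k)-2Aku$ splits the chirp into the chirp of the transform at $(a,b-k)$, the constant $e^{-\frac{iAk(b-k)}{B}}$ that leaves the integral, and the factor $e^{\frac{iAku}{B}}$ that multiplies $f(u)$, i.e.\ turns it into $(\mathcal M_{Ak/B}f)(u)$; collecting the external factors reproduces exactly $e^{-iak-\frac{i}{2B}(2Ak(b-k))}[SAST^{\mathbf N}_\Psi(\mathcal M_{Ak/B}f)](a,b-k)$. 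For (iv), substituting $s=\sigma t$ sends $\overline{\Psi(a(t-b))}$ to $\overline{\Psi(\tfrac{a}{\sigma}(s-\sigma b))}$, $e^{-iat}$ to $e^{-i(a/\sigma)s}$, and $2At(b-t)$ to $\tfrac{2A}{\sigma^2}s(\sigma b-s)$, so the chirp becomes $e^{-\frac{i}{2\sigma^2 B}(2As(\sigma b-s)-\sigma^2 Dp^2)}$; using $K_{\sigma^2 B}=\sigma^{-1}K_B$ and $|a/\sigma|=|a|/\sigma$ identifies the result (after tracking the overall $\sigma$-power in the normalization) as $[SAST^{\bar{\mathbf N}}_\Psi f](a/\sigma,\sigma b)$ with $\bar{\mathbf N}=(A,\sigma^2 B,C,\sigma^2 D;p,q)$. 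For (v), the substitution $u=-t$ (the sign from $du=-dt$ cancels the reversed limits) sends $a(t-b)$ to $(-a)(u-(-b))$, $e^{-iat}$ to $e^{-i(-a)u}$ and $2At(b-t)$ to $2Au\big((-b)-u\big)$, while $|a|=|-a|$ and $K_B$ is untouched, giving $[SAST^{\mathbf N}_\Psi f](-a,-b)$.

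The substitutions themselves are routine; the step needing care is the algebraic regrouping of the quadratic phase in (iii) and (iv). In the translation case one must check that the cross term in $u$ produced by the expansion of $2A(u+k)((b-k)-u)$ reassembles into an honest modulation of $f$ rather than a window-dependent factor, and in the scaling case that the $\sigma$-dependence carried by $K_B$ and by the $Dp^2$-term combines exactly into the stated augmented matrix $\bar{\mathbf N}$. In the write-up I would present (iii) and (iv) in full and dispatch (i), (ii), (v) in a line or two each.
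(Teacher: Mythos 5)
Your proposal follows the paper's proof essentially verbatim: (i)--(ii) are dispatched as immediate, and (iii)--(v) are obtained by exactly the same changes of variable ($u=t-k$, $s=\sigma t$, $u=-t$) in the integral representation (\ref{eqn qpst2}), with the identical regrouping of the quadratic chirp $2A(u+k)((b-k)-u)$ producing the external phase and the modulation $\mathcal M_{Ak/B}$ in the translation case. One caveat on (iv): if you genuinely replace the prefactor via $K_{\sigma^2 B}=\sigma^{-1}K_B$, a stray factor of $\sigma$ survives, so the stated identity holds only under the convention --- used silently in the paper, whose proof never rewrites $K_B$ as $K_{\sigma^2 B}$ --- that the $\bar{\mathbf N}$-transform retains the normalization $K_B$; your ``tracking the $\sigma$-power'' should state this explicitly rather than assert exact cancellation.
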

\begin{proof}For the sake of brevity we
We omit proof of (i) and (ii).\\
(iii) From (\ref{eqn qpst2}), we have for any real $k,$
\begin{eqnarray*}
&&\left[SAST^{{\mathbf N}}_\Psi({\mathbf T}_k f)\right](a,b)\\
&&=\frac{|a|}{\sqrt{2\pi}}K_B\int_{\mathbb R} f(t-k)\overline{\Psi(a(t-b))}e^{-iat-\frac{i}{2B}\left(2At(b-t)-Dp^2\right)}dt\\
&&=\frac{|a|}{\sqrt{2\pi}}K_B\int_{\mathbb R} f(y)\overline{\Psi(a(y-b+k))}e^{-ia(y+k)-\frac{i}{2B}\left(2A(y+k)(b-(y+k)-Dp^2\right)}dy\\
&&=\frac{|a|}{\sqrt{2\pi}}K_B\int_{\mathbb R} f(y)\overline{\Psi(a(y-(b-k)))}e^{-iay-\frac{i}{2B}\left(2Ay((b-k)-y)-Dp^2\right)}e^{-iak-\frac{i}{2B}\left(2Ak((b-k)-y)\right)}dy\\
&&=e^{-iak-\frac{i}{2B}\left(2Ak(b-k)\right)}\frac{|a|}{\sqrt{2\pi}}K_B\int_{\mathbb R} e^{\frac{Ak}{B}y}f(y)\overline{\Psi(a(y-(b-k)))}e^{-iay-\frac{i}{2B}\left(2Ay((b-k)-y)-Dp^2\right)}\\
&&=e^{-iak-\frac{i}{2B}\left(2Ak(b-k)\right)}\left[SAST^{{\mathbf N}}_\Psi({\mathcal M}_{\frac{Ak}{B}} f)\right](a,b-k).
\end{eqnarray*}
(iv)From (\ref{eqn qpst2}), we have for any positive real number $\sigma,$
\begin{eqnarray*}
&&\left[SAST^{{\mathbf N}}_\Psi(f(\sigma t))\right](a,b)\\
&&=\frac{a}{\sqrt{2\pi}}K_B\int_{\mathbb R} f(\sigma t)\overline{\Psi(a(t-b))}e^{-iat-\frac{i}{2B}\left(2At(b-t)-Dp^2\right)}dt\\
&&=\frac{a}{\sigma\sqrt{2\pi}}K_B\int_{\mathbb R} f(y)\overline{\Psi\left(\frac{a(y-\sigma b)}{\sigma}\right)}e^{-ia\frac{y}{\sigma}-\frac{i}{2B}\left(2A\frac{y}{\sigma}\left(b-\frac{y}{\sigma}\right)-Dp^2\right)}dy\\
&&=\frac{\frac{a}{\sigma}}{\sqrt{2\pi}}K_B\int_{\mathbb R} f(y)\overline{\Psi\left(\frac{a}{\sigma}(y-\sigma b)\right)}e^{-i\frac{a}{\sigma}{y}-\frac{i}{2\sigma^2B}\left(2{A}y\left(\sigma b-{y}\right)-\sigma^2Dp^2\right)}dy\\
&&=\left[SAST^{\bar{{\mathbf N}}}_\Psi f\right]\left(\frac{a}{\sigma}, \sigma b\right),\quad where \quad \bar{{\mathbf N}}=(A,\sigma^2 B,C,\sigma^2D;p,q)\\
\end{eqnarray*}
(iv)We have from (\ref{eqn qpst2})
\begin{eqnarray*}
&&\left[SAST^{{\mathbf N}}_\Psi(f(- t))\right](a,b)\\
&&=\frac{a}{\sqrt{2\pi}}\int_{\mathbb R} f(- t)\overline{\Psi(a(t-b))}e^{-iat-\frac{i}{2B}\left(2At(b-t)-Dp^2\right)}dt\\
&&=\frac{-a}{\sqrt{2\pi}}\int_{\mathbb R} f(y)\overline{\Psi(a(-y-b))}e^{-ia(-y)-\frac{i}{2B}\left(2A(-y)(b-(-y))-Dp^2\right)}dt\\
&&=\frac{-a}{\sqrt{2\pi}}\int_{\mathbb R} f(y)\overline{\Psi[(-a)(y-(-b))]}e^{-i(-a)y-\frac{i}{2B}\left(2Ay((-b)-y)-Dp^2\right)}dt\\
&&=\left[SAST^{{\mathbf N}}_\Psi f\right](-a,-b).\\
\end{eqnarray*}
This completes the proof.\\
\end{proof}

Next,  we  derive the admissibility condition for a window function $\Psi\in L^2(\mathbb R)$ via SAFT. Then
 we shall study some important theorems including the Rayleigh’s energy theorem, inversion
formula and   range theorem pertaining to the special affine Stockwell transform (\ref{eqn qpst}). To facilitate the motive, in the
preliminary step, we shall present the following lemma.

\begin{lemma}\label{lem ucp}Let $[SAST^{{\mathbf N}}_\Psi f](a,b)$  and $\mathbf S_{{\mathbf N}}[f(t)]$ be the special affine Stockwell transform  and special affine Fourier transform of a signal $f(t)$ in $L^2(\mathbb R)$. Then, we have
\begin{eqnarray}
 \nonumber&&\big|\mathbf S_{{\mathbf N}}\big[SAST^{\mathbf N}_\Psi f](a,b)\big](w)\big|\\
 \label{eqn lem ucp}&&=\frac{1}{\sqrt{2\pi}}\left|\mathbf S_{{\mathbf N}}[f]\big(w+{a}{B}\big)\right|.\left|\overline{\mathbf S_{{\mathbf N}} \left[ e^{i\left\{-y\left(1-\frac{1}{a}\right)\frac{p}{B}-\frac{Ay^2}{2B}\left(1+\frac{1}{a^2}\right)\right\}}\Psi(y)\right]}\left(\frac{w}{a}\right)\right|.
\end{eqnarray}
where $\Psi$ is window function in $L^2(\mathbb R)$.
\end{lemma}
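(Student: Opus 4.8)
The plan is to read the identity off the convolution realisation of the transform. Recall from the opening line of the chain (\ref{eqn qpst}) that, viewed as a function of the translation variable $b$, the SAST is, up to the constant $1/\sqrt{2\pi}$, the special affine convolution of $\mathcal M_{-a}f$ with $\mathcal D_a\tilde\Psi$:
$$[SAST^{{\mathbf N}}_\Psi f](a,b)=\frac{1}{\sqrt{2\pi}}\big(({\mathcal M}_{-a}f)\ast_{{\mathbf N}}({\mathcal D}_a\tilde\Psi)\big)(b).$$
Applying $\mathbf S_{{\mathbf N}}$ in $b$ and invoking the SAFT convolution theorem (\ref{eqn qpft conv}) then gives
$$\mathbf S_{{\mathbf N}}\big[[SAST^{{\mathbf N}}_\Psi f](a,\cdot)\big](w)=\frac{1}{\sqrt{2\pi}}\,e^{\frac{i}{2B}\left(2w(Dp-Bq)-Dw^2\right)}\,\mathbf S_{{\mathbf N}}[{\mathcal M}_{-a}f](w)\,\mathbf S_{{\mathbf N}}[{\mathcal D}_a\tilde\Psi](w).$$
Since the leading chirp is unimodular, passing to absolute values reduces the lemma to evaluating $|\mathbf S_{{\mathbf N}}[{\mathcal M}_{-a}f](w)|$ and $|\mathbf S_{{\mathbf N}}[{\mathcal D}_a\tilde\Psi](w)|$ separately.

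For the modulation factor I would establish a frequency–shift rule for the SAFT: the only $t$-linear term in the kernel exponent (\ref{ker qpft}) is $t(p-w)/B$, so replacing $w$ by $w+aB$ inside $\mathcal K_{{\mathbf N}}(t,w)$ reproduces exactly the factor $e^{-iat}$ together with a $t$-independent (hence unimodular, all parameters being real) phase. Integrating against $f$ therefore yields $\mathbf S_{{\mathbf N}}[{\mathcal M}_{-a}f](w)=e^{i\gamma(a,w)}\,\mathbf S_{{\mathbf N}}[f](w+aB)$, and hence $|\mathbf S_{{\mathbf N}}[{\mathcal M}_{-a}f](w)|=|\mathbf S_{{\mathbf N}}[f](w+aB)|$.

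For the dilation factor I would insert the involution ${\mathcal D}_a\tilde\Psi(t)=|a|\,\overline{\Psi(-at)}$ into (\ref{qpft}) and change variables $u=-at$; the Jacobian cancels $|a|$, leaving $K_B\int_{\mathbb R}\overline{\Psi(u)}\exp\!\big(\tfrac{i}{2B}(A u^2/a^2-\tfrac{2u}{a}(p-w)-2w(Dp-Bq)+D(w^2+p^2))\big)\,du$. I would then recognise this integral as a conjugated SAFT evaluated at $w/a$: matching the $u$-quadratic and $u$-linear parts against the kernel $\mathcal K_{{\mathbf N}}(u,w/a)$ forces the compensating chirp to be precisely $e^{i\{-y(1-\frac1a)\frac{p}{B}-\frac{Ay^2}{2B}(1+\frac1{a^2})\}}$, while all the surplus $u$-independent factors (together with the ratio $K_B/\overline{K_B}$, which is unimodular since $|K_B|=(2\pi|B|)^{-1/2}$) coalesce into one phase of modulus one. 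This gives $|\mathbf S_{{\mathbf N}}[{\mathcal D}_a\tilde\Psi](w)|=\big|\mathbf S_{{\mathbf N}}\big[e^{i\{-y(1-\frac1a)\frac{p}{B}-\frac{Ay^2}{2B}(1+\frac1{a^2})\}}\Psi(y)\big](w/a)\big|$, and substituting the two moduli into the display above yields (\ref{eqn lem ucp}); the surviving $1/\sqrt{2\pi}$ is exactly the normalisation built into the definition of the SAST.

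I expect the dilation step to be the only delicate point: one must keep the involution, the $|a|$ prefactor and the sign in $u=-at$ straight simultaneously, and after the substitution regroup the quadratic and linear parts of the exponent so that the integrand factors as (SAFT kernel at $w/a$) $\times$ (chirp in $y$) $\times$ ($y$-independent phase). Everything else — the convolution theorem, absorbing unimodular phases into the kernel, and discarding them upon taking $|\cdot|$ — is routine bookkeeping, and it is exactly this accumulation of phase factors that forces the statement to be phrased for the modulus rather than for the transform itself.
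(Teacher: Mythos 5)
Your proposal is correct and follows essentially the same route as the paper's own proof: write the SAST as $\tfrac{1}{\sqrt{2\pi}}(\mathcal M_{-a}f)\ast_{\mathbf N}(\mathcal D_a\tilde\Psi)$, apply the SAFT convolution theorem in $b$, discard the unimodular chirp, and evaluate the two factors via the frequency-shift computation $\mathbf S_{\mathbf N}[\mathcal M_{-a}f](w)=e^{i\gamma}\mathbf S_{\mathbf N}[f](w+aB)$ and the change of variables $u=-at$ with regrouping of the exponent into $\overline{\mathcal K_{\mathbf N}(u,w/a)}$ times the compensating chirp. Your identification of the compensating chirp and of the residual unimodular phases matches the paper's Equations (\ref{upc lem2}) and (\ref{upc lem3}) exactly, so nothing further is needed.
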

 \begin{proof}
Invoking the convolution theorem of SAFT (\ref{eqn qpft conv}), we can compute the SAFT of $[SAST^\mu_\Psi f](a,b)$, as

\begin{eqnarray}
\nonumber \left|\mathbf S_{{\mathbf N}}\left[[SAST^{\mathbf N}_\Psi f](a,b)\right](w)\right|&=&\left|\frac{1}{\sqrt{2\pi}}\mathbf S_{{\mathbf N}}\left[(\mathcal M_{-a}f)\Theta_{{\mathbf N}} (\mathcal D_{a}\tilde\Psi)\right](w)\right|\\
\nonumber&=&\left|\frac{1}{\sqrt{2\pi}}e^{\frac{i}{2 B}\left(2w(Dp-Bq)-dw^2\right)}\mathbf S_{{\mathbf N}}\left[(\mathcal M_{-a}f)\right](w)\mathbf S_{{\mathbf N}}\left[(\mathcal D_{a}\tilde\Psi)\right](w)\right|\\
\label{eqn ucp lem1}&=&\frac{1}{\sqrt{2\pi}}\big|\mathbf S_{{\mathbf N}}\big[(\mathcal M_{-a}f)\big(w)\big|.\big|\mathbf S_{{\mathbf N}}\left[(\mathcal D_{a}\tilde\Psi)\right](w)\big|.
\end{eqnarray}
Now,

\begin{eqnarray}
\nonumber&&\mathbf S_{{\mathbf N}}\left[(\mathcal M_{-a}f)\right](w)\\
\nonumber&&\qquad=K_B\int_{\mathbb R} e^{-iat} f(t) e^{\frac{i}{2B}\left(At^2+2t(p-w)-2w(Dp-Bq)+D(w^2+p^2)\right)}dt\\
\nonumber&&\qquad=K_B\int_{\mathbb R}  f(t) e^{\frac{i}{2B}\left(At^2+2t(p-(w+aB))-2w(Dp-Bq)+D(w^2+p^2)\right)}dt\\
\nonumber&&\qquad=e^{\frac{i}{2B}\left(2aB(Dp-Bq)-D(a^2B^2+2waB)\right)}\\
\nonumber&&\qquad\qquad\qquad\times K_B\int_{\mathbb R}  f(t) e^{\frac{i}{2B}\left(At^2+2t(p-(w+aB))-2(w+aB)(Dp-Bq)+D((w+aB)^2+p^2)\right)}dt\\
\label{upc lem2}&&\qquad=e^{\frac{i}{2}\left(2a(Dp-Bq)-Da(aB+2w)\right)}\mathbf S_{{\mathbf N}}[f]\left(w+{a}{B}\right)
\end{eqnarray}

And
\begin{eqnarray}
\nonumber&&\mathbf S_{{\mathbf N}}\left[(\mathcal D_{a}\tilde\Psi)\right](w)\\
\nonumber&&\qquad=K_B\int_{\mathbb R} |a|\overline{\Psi(-at)} e^{\frac{i}{2B}\left(At^2+2t(p-w)-2w(Dp-Bq)+D(w^2+p^2)\right)}dt\\
\nonumber&&\qquad=K_B\int_{\mathbb R} \overline{\Psi(y)} e^{\frac{i}{2B}\left\{A\left(\frac{-y}{a}\right)^2+2\left(\frac{-y}{a}\right)(p-w)-2w(Dp-Bq)+D(w^2+p^2)\right\}}dy\\
\nonumber&&\qquad=K_B\int_{\mathbb R} \overline{\Psi(y)} e^{\frac{-i}{2B}\left\{Ay^2+2y\left(p-\frac{w}{a}\right)-2\left(\frac{w}{a}\right)(Dp-Bq)+D\left(\left(\frac{w}{a}\right)^2+p^2\right)\right\}}\\
\nonumber&&\qquad\qquad\qquad\times e^{\frac{i}{2B}\left\{A\left(\frac{y}{a}\right)^2+Ay^2-2\left(\frac{y}{a}\right)(p-w)+2y\left(p-\frac{w}{a}\right)-2w(Dp-Bq)-2\frac{w}{a}(Dp-Bq)+D(w^2+p^2)+D\left(\left(\frac{w}{a}\right)^2+p^2\right)\right\}}dy\\
\nonumber&&\qquad=e^{\frac{i}{2B}\left\{Dw^2\left(1+\frac{1}{a^2}\right)+2Dp^2-2w(Dp-Bq)\left(1+\frac{1}{a}\right)\right\}}K_B \int_{\mathbb R} \overline{\Psi(y)}e^{\frac{-i}{2B}\left\{-Ay^2\left(1+\frac{1}{a^2}\right)-2yp\left(1-\frac{1}{a}\right)\right\}} \\
\nonumber&&\qquad\qquad\qquad\times e^{\frac{-i}{2B}\left\{Ay^2+2y\left(p-\frac{w}{a}\right)-2\left(\frac{w}{a}\right)(Dp-Bq)+D\left(\left(\frac{w}{a}\right)^2+p^2\right)\right\}} dy\\
\nonumber&&\qquad=\frac{1}{K^*_B}e^{\frac{i}{2B}\left\{Dw^2\left(1+\frac{1}{a^2}\right)+2Dp^2-2w(Dp-Bq)\left(1+\frac{1}{a}\right)\right\}}K_B\\
\nonumber&&\qquad\times K^*_B\overline{\int_{\mathbb R} \left[\Psi(y)e^{\frac{i}{2B}\left\{-Ay^2\left(1+\frac{1}{a^2}\right)-2yp\left(1-\frac{1}{a}\right)\right\}}\right] e^{\frac{i}{2B}\left\{Ay^2+2y\left(p-\frac{w}{a}\right)-2\left(\frac{w}{a}\right)(Dp-Bq)+D\left(\left(\frac{w}{a}\right)^2+p^2\right)\right\}} dz}\\
 \label{upc lem3}&&\qquad= \frac{K_B}{K^*_B}e^{\frac{i}{2B}\left\{Dw^2\left(1+\frac{1}{a^2}\right)+2Dp^2-2w(Dp-Bq)\left(1+\frac{1}{a}\right)\right\}}\overline{\mathbf S_{{\mathbf N}} \left[ e^{i\left\{-y\left(1-\frac{1}{a}\right)\frac{p}{B}-\frac{Ay^2}{2B}\left(1+\frac{1}{a^2}\right)\right\}}\Psi(y)\right]}\left(\frac{w}{a}\right)
\end{eqnarray}
Inserting (\ref{upc lem2}) and (\ref{upc lem3}) in (\ref{eqn ucp lem1}), we get the desired expression as
\begin{eqnarray*}
 \nonumber&&\left|\mathbf S_{{\mathbf N}}\left[[SAST^{\mathbf N}_\Psi f](a,b)\right](w)\right|\\
\nonumber&&=\frac{1}{\sqrt{2\pi}}\left|\mathbf S_{{\mathbf N}}[f]\left(w+{a}{B}\right)\right|.\left|\overline{\mathbf S_{{\mathbf N}} \left[ e^{i\left\{-y\left(1-\frac{1}{a}\right)\frac{p}{B}-\frac{Ay^2}{2B}\left(1+\frac{1}{a^2}\right)\right\}}\Psi(y)\right]}\left(\frac{w}{a}\right)\right|.
\end{eqnarray*}
Which completes proof.\\
\end{proof}

We are now in a position to present admissibility condition for proposed SAST.
\begin{theorem}[Admissibility condition]\label{thm adm}
A window function $\Psi\in L^2(\mathbb R)$ is admissible in the special affine Stockwell domain if the following condition
holds:
\begin{equation}\label{3.16}
\mathcal C_\Psi=\int_{\mathbb R^+}\left|{\mathbf S_{{\mathbf N}} \left[ e^{i\left\{t\left(1-\left(1-\frac{1}{a}\right)\frac{p}{B}\right)-\frac{At^2}{2B}\left(1+\frac{1}{a^2}\right)\right\}}\Psi(t)\right]}\left(\frac{w}{a}\right)\right|^2da<\infty \quad a.e \quad w\in \mathbb R,
\end{equation}

where $\mathcal C_\Psi$ is real and positive.

\end{theorem}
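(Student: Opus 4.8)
The plan is to identify $\mathcal C_\Psi$ as the exact constant governing the energy of the proposed transform and then to read off that, under the stated hypothesis, it is a finite, strictly positive real number. I would fix $f\in L^2(\mathbb R)$ and a scale $a>0$, view $b\mapsto[SAST^{{\mathbf N}}_\Psi f](a,b)$ as an element of $L^2(\mathbb R)$, and apply the Parseval formula \eqref{3} for the SAFT fiberwise in $b$:
\[
\int_{\mathbb R}\big|[SAST^{{\mathbf N}}_\Psi f](a,b)\big|^2\,db=\int_{\mathbb R}\Big|\mathbf S_{{\mathbf N}}\big[[SAST^{{\mathbf N}}_\Psi f](a,\cdot)\big](w)\Big|^2\,dw.
\]
Lemma~\ref{lem ucp} then replaces the integrand on the right by $\tfrac{1}{2\pi}\,\big|\mathbf S_{{\mathbf N}}[f](w+aB)\big|^2\,\big|\mathbf S_{{\mathbf N}}[e^{i\{\cdots\}}\Psi](w/a)\big|^2$, the various chirp phases produced in \eqref{upc lem2}--\eqref{upc lem3} disappearing once only the modulus is retained, where $e^{i\{\cdots\}}$ denotes the modulation appearing in the statement of the theorem.

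Next I would integrate this identity against the natural measure $d\nu(a,b)$ on the half-plane $\mathbb R^+\times\mathbb R$ and use Tonelli's theorem — legitimate because the integrand is non-negative — to exchange the order of integration. Inside the $a$-integral, the substitution $w\mapsto w-aB$ moves the entire scale dependence onto the factor with argument $w/a$, and a rescaling in that variable leaves precisely the integral $\mathcal C_\Psi$ displayed in the statement, which after the substitution no longer depends on the variable that was integrated out. Assembling the pieces yields an energy identity of the shape $\iint\big|[SAST^{{\mathbf N}}_\Psi f]\big|^2\,d\nu=\mathcal C_\Psi\,\|f\|_{L^2(\mathbb R)}^2$, so finiteness of $\mathcal C_\Psi$ is exactly what makes $SAST^{{\mathbf N}}_\Psi$ a scalar multiple of an isometry, i.e.\ makes $\Psi$ admissible. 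Reality and positivity of $\mathcal C_\Psi$ are then automatic: the integrand is a squared modulus, hence real and non-negative, and it cannot vanish for a.e.\ $a$ unless $\mathbf S_{{\mathbf N}}[e^{i\{\cdots\}}\Psi]\equiv 0$, which by injectivity of the SAFT forces $\Psi\equiv 0$ and contradicts $\Psi$ being a genuine window; hence $0<\mathcal C_\Psi<\infty$.

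The main obstacle is the bookkeeping of the change of variables: the modulation $\mathcal M_{-a}$ contributes the shift $w\mapsto w+aB$ whereas the dilation $\mathcal D_a$ contributes the rescaling $w\mapsto w/a$, and these two effects must be disentangled so that, after the translation-type variable is integrated out, the surviving $a$-integral is genuinely a constant — equivalently, so that the weight built into $d\nu$ is chosen consistently (the analogue of the $da/a$ weight in the classical Stockwell reconstruction). A secondary point is the ``a.e.\ $w$'' qualifier: since altering the inner integral on a Lebesgue-null set of $w$ does not affect the energy identity, admissibility is insensitive to it. Once these normalisation details are settled, the remainder is a routine chain of SAFT-Plancherel, Lemma~\ref{lem ucp}, Tonelli, and two linear substitutions.
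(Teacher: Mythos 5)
Your proposal follows essentially the same route as the paper's own proof: SAFT--Parseval applied in the translation variable $b$, the factorization supplied by Lemma \ref{lem ucp} (with the chirp phases dropping out under the modulus), Tonelli to swap the $(a,w)$ integrations, and the translation/modulation substitution that turns the shifted argument into $\tfrac{w}{a}$, so that the surviving $a$-integral is identified with $\mathcal C_\Psi$ in \eqref{3.16}. The only cosmetic differences are that the paper specializes to $f=\Psi$ to conclude that the energy is nonzero and finite exactly when $0<\mathcal C_\Psi<\infty$, whereas you keep a general $f$ and argue positivity via injectivity of the SAFT, and you hedge on the measure $d\nu$ where the paper (consistently with Theorem \ref{thm moyal}) simply uses $da\,db$; like the paper's own argument, yours treats the inner $a$-integral as a constant even though, as written, it still depends on $w$, which is precisely why the condition carries the ``a.e.\ $w$'' qualifier.
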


\begin{proof}
By the  virtue of Lemma \ref{lem ucp}, translation invariance of the Lebesgue integral and modulation property of SAFT,
 for any square integrable function $f$, we have
\begin{eqnarray}\label{ad1}
\nonumber&&\int_{\mathbb R}\int_{\mathbb R^+}\left|\left\langle f,\,\Psi_{{{\mathbf N}}, a,b}\right\rangle\right|^2dadb\\
\nonumber&&\qquad=\int_{\mathbb R}\int_{\mathbb R^+}\left|\left(\left[\left(\mathcal M_{-a}f\right)\Theta_{{\mathbf N}} \left(\mathcal D_{a}\tilde\Psi\right)\right](b)\right)(b)\right|^2dadb\\
\nonumber&&\qquad=\int_{\mathbb R}\int_{\mathbb R^+}\left|\mathbf S_{{\mathbf N}}\left[\left(\mathcal M_{-a}f\right)\Theta_{{\mathbf N}} \left(\mathcal D_{a}\tilde\Psi\right)\right](w)\right|^2dadw\\
\nonumber&&\qquad=\int_{\mathbb R}\int_{\mathbb R^+}\left|\mathbf S_{{\mathbf N}}\left[\left(\mathcal M_{-a}f\right)\right](w)\mathbf S_{{\mathbf N}}\left[ \left(\mathcal D_{a}\tilde\Psi\right)\right](w)\right|^2dadw\\
\nonumber&&\qquad=\int_{\mathbb R}\int_{\mathbb R^+}\left|\mathbf S_{{\mathbf N}}[f]\left(w+{a}{B}\right)\overline{\mathbf S_{{\mathbf N}} \left[ e^{i\left\{-y\left(1-\frac{1}{a}\right)\frac{p}{B}-\frac{Ay^2}{2B}\left(1+\frac{1}{a^2}\right)\right\}}\Psi(y)\right]}\left(\frac{w}{a}\right)\right|^2dadw\\
\nonumber&&\qquad=\int_{\mathbb R}\left|\mathbf S_{{\mathbf N}}[f]\left(w\right)\right|^2\left(\int_{\mathbb R^+}\left|\mathbf S_{{\mathbf N}} \left[ e^{i\left\{-y\left(1-\frac{1}{a}\right)\frac{p}{B}-\frac{Ay^2}{2B}\left(1+\frac{1}{a^2}\right)\right\}}\Psi(y)\right]\left(\frac{w-a}{a}\right)\right|^2da\right)dw\\
\nonumber&&\qquad=\int_{\mathbb R}\left|\mathbf S_{{\mathbf N}}[f]\left(w\right)\right|^2\left(\int_{\mathbb R^+}\left|{\mathbf S_{{\mathbf N}} \left[ e^{i\left\{-y\left(1-\frac{1}{a}\right)\frac{p}{B}-\frac{Ay^2}{2B}\left(1+\frac{1}{a^2}\right)\right\}}e^{iy}\Psi(y)\right]}\left(\frac{w}{a}\right)\right|^2da\right)dw\\
\nonumber&&\qquad=\int_{\mathbb R}\left|\mathbf S_{{\mathbf N}}[f]\left(w\right)\right|^2\left(\int_{\mathbb R^+}\left|{\mathbf S_{{\mathbf N}} \left[ e^{i\left\{y\left(1-\left(1-\frac{1}{a}\right)\frac{p}{B}\right)-\frac{Ay^2}{2B}\left(1+\frac{1}{a^2}\right)\right\}}\Psi(y)\right]}\left(\frac{w}{a}\right)\right|^2da\right)dw.\\
\end{eqnarray}

For $f=\Psi,$ (\ref{ad1}) yields
\begin{eqnarray}\label{ad2}
\nonumber&&\int_{\mathbb R}\int_{\mathbb R^+}\left|\left\langle \Psi,\,\Psi_{{{\mathbf N}}, a,b}\right\rangle\right|^2dadb\\
\nonumber&&\qquad=\int_{\mathbb R}\left|\mathbf S_{{\mathbf N}}[\Psi]\left(w\right)\right|^2\left(\int_{\mathbb R^+}\left|{\mathbf S_{{\mathbf N}} \left[ e^{i\left\{y\left(1-\left(1-\frac{1}{a}\right)\frac{p}{B}\right)-\frac{Ay^2}{2B}\left(1+\frac{1}{a^2}\right)\right\}}\Psi(y)\right]}\left(\frac{w}{a}\right)\right|^2da\right)dw.\\
\end{eqnarray}

Since $\Psi\in L^2(\mathbb R)$, thus, we conclude that (\ref{ad2}) is non-zero and finite, provided
\begin{equation*}
0<\mathcal C_\Psi=\int_{\mathbb R^+}\left|{\mathbf S_{{\mathbf N}} \left[ e^{i\left\{t\left(1-\left(1-\frac{1}{a}\right)\frac{p}{B}\right)-\frac{At^2}{2B}\left(1+\frac{1}{a^2}\right)\right\}}\Psi(t)\right]}\left(\frac{w}{a}\right)\right|^2da<\infty \quad a.e \quad w\in \mathbb R,
\end{equation*}
which is the desired admissibility condition.
\end{proof}

\begin{theorem}[Rayleigh’s Energy Theorem]\label{thm moyal}If $\Psi \in L^2(\mathbb R)$ is admissible window and $[SAST^{{\mathbf N}}_{\Psi}f(t)](a,b)$ is the special affine Stockwell transform of any signal $f$ defined in (\ref{eqn qpst}),then for any $f,g \in L^2(\mathbb R)$, we have
\begin{equation}\label{eqn moyl}
\int_{\mathbb R}\int_{\mathbb R^+}[SAST^{{\mathbf N}}_{\Psi}f(t)](a,b)\overline{[SAST^{{\mathbf N}}_{\Psi}g(t)](a,b)}dadb=\frac{\mathcal C_\Psi}{2\pi}\langle f,g\rangle,
\end{equation}
where $\mathcal C_\Psi$ is given by (\ref{3.16}).
\end{theorem}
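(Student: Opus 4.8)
The plan is to peel off the $(a,b)$-integration one variable at a time: first use Parseval's formula for the SAFT (equation (\ref{3})) in the $b$-variable to convert the inner integral into an integral over the special affine frequency $w$, and then carry out the $a$-integration and recognise the resulting $a$-integral as the admissibility constant $\mathcal C_\Psi$. To begin, I would use the first line of (\ref{eqn qpst}) to write, for a fixed $a>0$,
\[
\int_{\mathbb R}[SAST^{{\mathbf N}}_{\Psi}f(t)](a,b)\,\overline{[SAST^{{\mathbf N}}_{\Psi}g(t)](a,b)}\,db=\frac{1}{2\pi}\big\langle (\mathcal M_{-a}f)\ast_{\mathbf N}(\mathcal D_{a}\tilde\Psi),\,(\mathcal M_{-a}g)\ast_{\mathbf N}(\mathcal D_{a}\tilde\Psi)\big\rangle_{L^2(db)}.
\]
Applying Parseval's identity (\ref{3}) for the SAFT in $b$, then the convolution theorem (\ref{eqn qpft conv}), and finally the explicit evaluations (\ref{upc lem2}) and (\ref{upc lem3}) from the proof of Lemma \ref{lem ucp}, every factor not depending on $f$ or $g$ — the phase $e^{\frac{i}{2B}(2w(Dp-Bq)-Dw^2)}$ from (\ref{eqn qpft conv}), the chirp $e^{\frac{i}{2}(2a(Dp-Bq)-Da(aB+2w))}$ and the constant $K_B/K^*_B$ coming from (\ref{upc lem2})--(\ref{upc lem3}) — is unimodular and hence cancels against its conjugate in the Hermitian pairing. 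What survives is
\[
\int_{\mathbb R}[SAST^{{\mathbf N}}_{\Psi}f(t)](a,b)\,\overline{[SAST^{{\mathbf N}}_{\Psi}g(t)](a,b)}\,db=\frac{1}{2\pi}\int_{\mathbb R}\mathbf S_{{\mathbf N}}[f]\big(w+aB\big)\overline{\mathbf S_{{\mathbf N}}[g]\big(w+aB\big)}\left|\mathbf S_{{\mathbf N}}\left[e^{i\left\{-y\left(1-\frac{1}{a}\right)\frac{p}{B}-\frac{Ay^2}{2B}\left(1+\frac{1}{a^2}\right)\right\}}\Psi(y)\right]\left(\frac{w}{a}\right)\right|^2dw.
\]

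Next I would integrate both sides over $a\in\mathbb R^+$, interchange the $a$- and $w$-integrals by Fubini's theorem (legitimate since, by Cauchy--Schwarz together with $f,g\in L^2(\mathbb R)$ and the admissibility bound (\ref{3.16}), the integrand lies in $L^1(\mathbb R\times\mathbb R^+)$), substitute $u=w+aB$ in the inner integral, and interchange back. Absorbing a modulation $e^{iy}$ into the window exactly as in the last lines of the computation (\ref{ad1}), the resulting inner $a$-integral is precisely the admissibility integral $\mathcal C_\Psi$ of (\ref{3.16}) — finite, positive, and, by hypothesis, independent of $u$. We are then left with $\frac{\mathcal C_\Psi}{2\pi}\int_{\mathbb R}\mathbf S_{{\mathbf N}}[f](u)\overline{\mathbf S_{{\mathbf N}}[g](u)}\,du=\frac{\mathcal C_\Psi}{2\pi}\langle\mathbf S_{{\mathbf N}}[f],\mathbf S_{{\mathbf N}}[g]\rangle$, and one last application of Parseval's identity (\ref{3}) turns this into $\frac{\mathcal C_\Psi}{2\pi}\langle f,g\rangle$, which is (\ref{eqn moyl}). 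As an alternative, the computation (\ref{ad1})--(\ref{ad2}) already produces the diagonal case $g=f$, i.e. $\|SAST^{{\mathbf N}}_\Psi f\|_{L^2(\mathbb R^+\times\mathbb R)}^2=\frac{\mathcal C_\Psi}{2\pi}\|f\|^2$, whence the general sesquilinear identity follows at once by the polarization identity, both sides of (\ref{eqn moyl}) being sesquilinear forms in $(f,g)$.

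I expect the genuine work here to be bookkeeping rather than anything conceptual: one must check that every $f,g$-independent exponential prefactor generated by (\ref{eqn qpft conv}), (\ref{upc lem2}) and (\ref{upc lem3}) is indeed unimodular, so that it drops out of the Hermitian form, and one must justify the two applications of Fubini's theorem and the change of variables $u=w+aB$ — in particular confirming, via the admissibility hypothesis (\ref{3.16}), that the remaining inner $a$-integral is finite and equals the constant $\mathcal C_\Psi$ for almost every $u$. Everything else is a direct chain of Parseval's formula, the convolution theorem, and the identities already established in Lemma \ref{lem ucp} and Theorem \ref{thm adm}.
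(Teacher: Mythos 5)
Your argument is correct, but it follows a different route than the paper's own proof of Theorem \ref{thm moyal}. The paper writes both $[SAST^{{\mathbf N}}_{\Psi}f](a,b)$ and $[SAST^{{\mathbf N}}_{\Psi}g](a,b)$ via the spectral representation of Proposition \ref{prop 1}, applies Fubini, and then collapses the $b$-integral through the formal kernel orthogonality $\int_{\mathbb R}\mathcal K_{{\mathbf N}}(b,\eta)\overline{\mathcal K_{{\mathbf N}}(b,w)}\,db=\delta(\eta-w)$, after which the $a$-integral is identified with $\mathcal C_\Psi$ and Parseval finishes the job. You instead work directly from the convolution representation $[SAST^{{\mathbf N}}_{\Psi}f](a,b)=\frac{1}{\sqrt{2\pi}}[(\mathcal M_{-a}f)\ast_{{\mathbf N}}(\mathcal D_a\tilde\Psi)](b)$, apply the SAFT Parseval identity (\ref{3}) in the $b$-variable, and use the convolution theorem (\ref{eqn qpft conv}) together with the explicit evaluations (\ref{upc lem2})--(\ref{upc lem3}); this is essentially the polarized version of the paper's own admissibility computation (\ref{ad1}) rather than its Rayleigh proof. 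What your route buys is the avoidance of the distributional delta-function step: all $f,g$-independent prefactors are unimodular and cancel in the Hermitian pairing, so the argument stays within honest $L^2$ identities, and your Fubini and change-of-variables justifications are at least as careful as the paper's. Your polarization shortcut (diagonal case from (\ref{ad1})--(\ref{ad2}) plus sesquilinearity of both sides) is also valid and even shorter. Both your argument and the paper's share the same implicit hypothesis, which you correctly flag: the admissibility integral in (\ref{3.16}) is treated as a constant independent of the spectral variable, exactly as the paper does.
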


\begin{proof}
From Proposition \ref{prop 1}, the SAST of any  $f \in L^2(\mathbb R)$ is given by
\begin{eqnarray*}
\nonumber&&[SAST^{{\mathbf N}}_\Psi f(t)](a,b)\\
\nonumber&&\qquad=\frac{{e^{-iab}}K_B}{\sqrt{2\pi}K^*_B}\int_{\mathbb R}e^{\frac{-i}{2B}\left(2\frac{w}{a}(Dp-Bq)-D\left(\frac{w^2}{a^2}-2p^2 \right)\right)} \mathbf S_{{\mathbf N}}\left[f(t)\right](w)\\
&&\qquad\qquad\qquad\times\overline{\mathbf S_{{\mathbf N}}\left[ e^{i\left\{t\left(1-\left(1-\frac{1}{a}\right)\frac{p}{B}\right)-\frac{At^2}{2B}\left(1+\frac{1}{a^2}\right)\right\}}\Psi(t)\right]}\left(\frac{w}{a}\right)\overline{\mathcal K_{{\mathbf N}}(b,w)}dw.
\end{eqnarray*}
Similarly,
\begin{eqnarray*}
\nonumber&&[SAST^{{\mathbf N}}_\Psi g(t)](a,b)\\
\nonumber&&\qquad=\frac{{e^{-iab}}K_B}{\sqrt{2\pi}K^*_B}\int_{\mathbb R}e^{\frac{-i}{2B}\left(2\frac{\eta}{a}(Dp-Bq)-D\left(\frac{\eta^2}{a^2}-2p^2 \right)\right)} \mathbf S_{{\mathbf N}}\left[f(t)\right](\eta)\\
&&\qquad\qquad\qquad\times\overline{\mathbf S_{{\mathbf N}}\left[ e^{i\left\{t\left(1-\left(1-\frac{1}{a}\right)\frac{p}{B}\right)-\frac{At^2}{2B}\left(1+\frac{1}{a^2}\right)\right\}}\Psi(t)\right]}\left(\frac{\eta}{a}\right)\overline{\mathcal K_{{\mathbf N}}(b,\eta)}d\eta.
\end{eqnarray*}
Then, an implication of the Fubini’s theorem yields
\begin{eqnarray*}
&&\int_{\mathbb R}\int_{\mathbb R^+}[SAST^{{\mathbf N}}_{\Psi}f(t)](a,b)\overline{[SAST^{{\mathbf N}}_{\Psi}g(t)](a,b)}dadb\\
&&=\frac{1}{{2\pi}}\int_{\mathbb R}\int_{\mathbb R^+}\left\{\int_{\mathbb R}e^{\frac{-i}{2B}\left(2\frac{w}{a}(Dp-Bq)-D\left(\frac{w^2}{a^2}-2p^2 \right)\right)} \mathbf S_{{\mathbf N}}\left[f(t)\right](w)\right.\\
&&\qquad\qquad\qquad\times\left.\overline{\mathbf S_{{\mathbf N}}\left[ e^{i\left\{t\left(1-\left(1-\frac{1}{a}\right)\frac{p}{B}\right)-\frac{At^2}{2B}\left(1+\frac{1}{a^2}\right)\right\}}\Psi(t)\right]}\left(\frac{w}{a}\right)\overline{\mathcal K_{{\mathbf N}}(b,w)}dw\right\}\\
&&\qquad\qquad\times \left\{\int_{\mathbb R}e^{\frac{i}{2B}\left(2\frac{\eta}{a}(Dp-Bq)-D\left(\frac{\eta^2}{a^2}-2p^2 \right)\right)}\overline{ \mathbf S_{{\mathbf N}}\left[f(t)\right](\eta)}\right.\\
&&\qquad\qquad\qquad\left.\times{\mathbf S_{{\mathbf N}}\left[ e^{i\left\{t\left(1-\left(1-\frac{1}{a}\right)\frac{p}{B}\right)-\frac{At^2}{2B}\left(1+\frac{1}{a^2}\right)\right\}}\Psi(t)\right]}\left(\frac{\eta}{a}\right){\mathcal K_{{\mathbf N}}(b,\eta)}d\eta \right\}dadb\\
&&=\frac{1}{{2\pi}}\int_{\mathbb R}\int_{\mathbb R}\int_{\mathbb R^+}e^{\frac{i}{2B}\left\{\frac{2}{a}(Dp-Bq)\left(\eta-w\right)-2\frac{D}{a^2}(\eta^2-w^2)\right\}} \mathbf S_{{\mathbf N}}\left[f(t)\right](w)\overline{\mathbf S_{{\mathbf N}}\left[g(t)\right]}(\eta)\\
&&\times \overline{\mathbf S_{{\mathbf N}}\left[ e^{i\left\{t\left(1-\left(1-\frac{1}{a}\right)\frac{p}{B}\right)-\frac{At^2}{2B}\left(1+\frac{1}{a^2}\right)\right\}}\Psi(t)\right]}\left(\frac{w}{a}\right){\mathbf S_{{\mathbf N}}\left[ e^{i\left\{t\left(1-\left(1-\frac{1}{a}\right)\frac{p}{B}\right)-\frac{At^2}{2B}\left(1+\frac{1}{a^2}\right)\right\}}\Psi(t)\right]}\left(\frac{\eta}{a}\right)\\
&&\qquad\times\left\{{\mathcal K_{{\mathbf N}}(b,\eta)}\overline{\mathcal K_{{\mathbf N}}(b,w)}db \right\}dadw d\eta\\
&&=\frac{1}{2\pi}\int_{\mathbb R}\int_{\mathbb R}\int_{\mathbb R^+} \mathbf S_{{\mathbf N}}\left[f(t)\right](w)\overline{\mathbf S_{{\mathbf N}}\left[g(t)\right]}(\eta)\overline{\mathbf S_{{\mathbf N}}\left[ e^{i\left\{t\left(1-\left(1-\frac{1}{a}\right)\frac{p}{B}\right)-\frac{At^2}{2B}\left(1+\frac{1}{a^2}\right)\right\}}\Psi(t)\right]}\left(\frac{w}{a}\right)\\
&&\times {\mathbf S_{{\mathbf N}}\left[e^{i\left\{t\left(1-\left(1-\frac{1}{a}\right)\frac{p}{B}\right)-\frac{At^2}{2B}\left(1+\frac{1}{a^2}\right)\right\}}\Psi(t)\right]}\left(\frac{\eta}{a}\right)e^{\frac{i}{2B}\left\{\frac{2}{a}(Dp-Bq)\left(\eta-w\right)-2\frac{D}{a^2}(\eta^2-w^2)\right\}} {\delta}(\eta-w)dadw d\eta\\
&&=\frac{1}{2\pi}\int_{\mathbb R}\int_{\mathbb R^+} \mathbf S_{{\mathbf N}}\left[f(t)\right](w)\overline{\mathbf S_{{\mathbf N}}\left[g(t)\right]}(w)
\overline{\mathbf S_{{\mathbf N}}\left[e^{i\left\{t\left(1-\left(1-\frac{1}{a}\right)\frac{p}{B}\right)-\frac{At^2}{2B}\left(1+\frac{1}{a^2}\right)\right\}}\Psi(t)\right]}\left(\frac{w}{a}\right)\\
&&\times {\mathbf S_{{\mathbf N}}\left[e^{i\left\{t\left(1-\left(1-\frac{1}{a}\right)\frac{p}{B}\right)-\frac{At^2}{2B}\left(1+\frac{1}{a^2}\right)\right\}}\Psi(t)\right]}\left(\frac{w}{a}\right)dadw dw\\
&&=\frac{1}{2\pi}\int_{\mathbb R}\int_{\mathbb R^+} \mathbf S_{{\mathbf N}}\left[f(t)\right](w)\overline{\mathbf S_{{\mathbf N}}\left[g(t)\right]}(w)
\left|{\mathbf S_{{\mathbf N}}\left[e^{i\left\{t\left(1-\left(1-\frac{1}{a}\right)\frac{p}{B}\right)-\frac{At^2}{2B}\left(1+\frac{1}{a^2}\right)\right\}}\Psi(t)\right]}\left(\frac{w}{a}\right)\right|^2da dw\\
&&=\frac{1}{2\pi}\int_{\mathbb R} \mathbf S_{{\mathbf N}}\left[f(t)\right](w)\overline{\mathbf S_{{\mathbf N}}\left[g(t)\right]}(w)\left\{\int_{\mathbb R^+}
\left|{\mathbf S_{{\mathbf N}}\left[e^{i\left\{t\left(1-\left(1-\frac{1}{a}\right)\frac{p}{B}\right)-\frac{At^2}{2B}\left(1+\frac{1}{a^2}\right)\right\}}\Psi(t)\right]}\left(\frac{w}{a}\right)\right|^2 da\right\}dw\\
&&=\frac{1}{2\pi}\times \mathcal C_\Psi \int_{\mathbb R} \mathbf S_{{\mathbf N}}\left[f(t)\right](w)\overline{\mathbf S_{{\mathbf N}}\left[g(t)\right]}(w)dw\\
&&= \frac{\mathcal C_\Psi}{2\pi}\left\langle \mathbf S_{{\mathbf N}}\left[f(t)\right](w),{\mathbf S_{{\mathbf N}}\left[g(t)\right]}(w)\right\rangle\\
&&=\frac{\mathcal C_\Psi}{2\pi}\langle f,g\rangle.
\end{eqnarray*}
This completes the proof

\end{proof}

\begin{corollary}\label{cor1}For $f=g$, the above theorem yields energy preserving relation
\begin{equation}
\left\|[SAST^{{\mathbf N}}_{\Psi}f(t)](a,b)\right\|^2=\frac{\mathcal C_\Psi}{2\pi}\|f\|^2.
\end{equation}
\end{corollary}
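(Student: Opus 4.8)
The plan is to obtain this corollary as an immediate specialization of Theorem \ref{thm moyal} (Rayleigh's energy theorem), so no new machinery is needed. Taking $g=f$ in the identity (\ref{eqn moyl}), the left-hand side becomes
\begin{equation*}
\int_{\mathbb R}\int_{\mathbb R^+}[SAST^{{\mathbf N}}_{\Psi}f(t)](a,b)\,\overline{[SAST^{{\mathbf N}}_{\Psi}f(t)](a,b)}\,da\,db
=\int_{\mathbb R}\int_{\mathbb R^+}\bigl|[SAST^{{\mathbf N}}_{\Psi}f(t)](a,b)\bigr|^2\,da\,db,
\end{equation*}
which is by definition $\bigl\|[SAST^{{\mathbf N}}_{\Psi}f(t)](a,b)\bigr\|^2$ (the norm in $L^2(\mathbb R^+\times\mathbb R)$). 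On the right-hand side of (\ref{eqn moyl}), the inner product $\langle f,g\rangle$ collapses to $\langle f,f\rangle=\|f\|^2$. Hence
\begin{equation*}
\bigl\|[SAST^{{\mathbf N}}_{\Psi}f(t)](a,b)\bigr\|^2=\frac{\mathcal C_\Psi}{2\pi}\|f\|^2,
\end{equation*}
which is exactly the asserted energy-preserving relation.

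There is essentially no obstacle here: the only point worth a sentence is to confirm that the measure implicit in the $L^2$-norm of the transform is $da\,db$ over $\mathbb R^+\times\mathbb R$, matching the integration region in (\ref{eqn moyl}), and that $\mathcal C_\Psi$ is real and positive (established in Theorem \ref{thm adm}), so that the factor $\mathcal C_\Psi/(2\pi)$ is a genuine positive normalization constant. I would therefore present the argument in two or three lines, citing Theorem \ref{thm moyal} and invoking the polarization-to-diagonal reduction $g\mapsto f$, and note that (as is standard for such energy theorems) $\Psi$ being admissible guarantees $0<\mathcal C_\Psi<\infty$, so that $[SAST^{{\mathbf N}}_{\Psi}f(t)](a,b)\in L^2(\mathbb R^+\times\mathbb R)$ whenever $f\in L^2(\mathbb R)$; in particular the SAST is, up to the constant $\sqrt{\mathcal C_\Psi/(2\pi)}$, an isometry onto its range, a fact that will be used when proving the inversion formula and range theorem.
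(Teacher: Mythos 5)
Your proposal is correct and follows the same route the paper takes: the corollary is obtained simply by setting $g=f$ in the identity (\ref{eqn moyl}) of Theorem \ref{thm moyal}, so that the double integral becomes the squared $L^2(\mathbb R^+\times\mathbb R)$-norm of the transform and $\langle f,f\rangle=\|f\|^2$. Your added remarks on the positivity of $\mathcal C_\Psi$ and the isometry-up-to-constant interpretation are sound but not needed beyond what the paper's one-line deduction already contains.
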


Next, we shall obtain an inversion formula associated with the special affine Stockwell transform (\ref{eqn qpst}).

\begin{theorem}[Inversion Formula).]\label{thm inver}Every signal $f \in L^2(\mathbb R)$ can be reconstructed from
the corresponding  special affine Stockwell transform $[SAST^{{\mathbf N}}_{\Psi}f(t)](a,b)$ by the formula
\begin{equation}\label{eqn thm inv}
f(t)=\frac{\sqrt{2\pi}}{\mathcal C_\Psi}\int_{\mathbb R}\int_{\mathbb R^+}[SAST^{{\mathbf N}}_{\Psi}f(t)](a,b)\Psi_{{{\mathbf N}},a,b}(t)dadb, \quad a.e.
\end{equation}
\end{theorem}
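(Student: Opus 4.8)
The plan is to verify the reconstruction formula \emph{weakly} against an arbitrary $g\in L^2(\mathbb R)$ and then upgrade to an almost-everywhere identity via the non-degeneracy of the $L^2$ pairing. Write $F(t)$ for the right-hand side of (\ref{eqn thm inv}), interpreted a priori as a weakly convergent vector-valued integral. For $g\in L^2(\mathbb R)$ I would compute $\langle F,g\rangle$ by substituting the definition (\ref{eqn def qpst}) of the SAST and observing that $\int_{\mathbb R}\Psi_{{{\mathbf N}},a,b}(t)\,\overline{g(t)}\,dt=\sqrt{2\pi}\,\overline{[SAST^{{\mathbf N}}_\Psi g](a,b)}$. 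After interchanging the $t$-integration with the $(a,b)$-integration this gives
\begin{equation*}
\langle F,g\rangle=\frac{2\pi}{\mathcal C_\Psi}\int_{\mathbb R}\int_{\mathbb R^+}[SAST^{{\mathbf N}}_\Psi f](a,b)\,\overline{[SAST^{{\mathbf N}}_\Psi g](a,b)}\,da\,db,
\end{equation*}
and Rayleigh's energy theorem (Theorem \ref{thm moyal}) collapses the double integral to $\frac{\mathcal C_\Psi}{2\pi}\langle f,g\rangle$, whence $\langle F,g\rangle=\langle f,g\rangle$. Since $g$ was arbitrary, $F=f$ in $L^2(\mathbb R)$ and hence $f(t)=F(t)$ for almost every $t$.

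The delicate point is the Fubini interchange in the display above. From (\ref{fun psi mu}) one has $|\Psi_{{{\mathbf N}},a,b}(t)|=|a|\,|K_B|\,|\Psi(a(t-b))|$, so the change of variable $y=a(t-b)$ gives $\int_{\mathbb R}|\Psi_{{{\mathbf N}},a,b}(t)|\,dt=|K_B|\,\|\Psi\|_{L^1}$ whenever $\Psi\in L^1\cap L^2$, while Corollary \ref{cor1} furnishes $[SAST^{{\mathbf N}}_\Psi f]\in L^2(\mathbb R\times\mathbb R^+)$. These facts, together with $g\in L^2$ and the Cauchy--Schwarz inequality, let me dominate the map $(a,b,t)\mapsto[SAST^{{\mathbf N}}_\Psi f](a,b)\,\overline{g(t)}\,\Psi_{{{\mathbf N}},a,b}(t)$ by an integrable function on $\mathbb R^+\times\mathbb R\times\mathbb R$, which legitimizes the exchange. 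To avoid an integrability side-hypothesis on $\Psi$, one may instead first prove the identity for $f$ ranging over a dense subclass --- for instance those $f$ for which $\mathbf S_{{\mathbf N}}[f]$ is bounded with compact support, where the defining integral of $F$ converges absolutely by Proposition \ref{prop 1} --- and then extend to all of $L^2(\mathbb R)$ by continuity, using that Corollary \ref{cor1} makes $f\mapsto F$ a scalar multiple of an isometry.

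I expect this Fubini/density argument to be the only genuine obstacle; the rest is bookkeeping. It is worth emphasising that the admissibility hypothesis on $\Psi$ enters exactly once, through Theorem \ref{thm moyal}, to ensure that $\mathcal C_\Psi$ is finite and positive, so that the normalisation by $1/\mathcal C_\Psi$ in (\ref{eqn thm inv}) is meaningful.
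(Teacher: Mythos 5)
Your proposal follows essentially the same route as the paper: pairing the reconstruction candidate against an arbitrary $g\in L^2(\mathbb R)$, invoking Rayleigh's energy theorem (Theorem \ref{thm moyal}) to collapse the double integral to $\frac{\mathcal C_\Psi}{2\pi}\langle f,g\rangle$, and concluding by the arbitrariness of $g$. The only difference is that you spell out the Fubini interchange (via an $L^1\cap L^2$ domination or a density argument), a point the paper's proof passes over silently, so your write-up is a slightly more careful version of the same argument.
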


\begin{proof}
 For any arbitrary $g\in L^2(\mathbb R)$, Rayleigh’s identity (\ref{eqn moyl}) yields
\begin{eqnarray*}
\frac{\mathcal C_\Psi}{2\pi}\langle f,g\rangle&=&\int_{\mathbb R}\int_{\mathbb R^+}[SAST^{{\mathbf N}}_{\Psi}f(t)](a,b)\overline{[SAST^{{\mathbf N}}_{\Psi}g(t)](a,b)}dadb\\
&=&\frac{1}{\sqrt{2\pi}}\int_{\mathbb R}\int_{\mathbb R^+}[SAST^{{\mathbf N}}_{\Psi}f(t)](a,b)\left\{\overline{\int_{\mathbb R}g(t)\overline{\Psi_{{{\mathbf N}},a,b}(t)}dt}\right\}dadb\\
&=&\frac{1}{\sqrt{2\pi}}\int_{\mathbb R}\int_{\mathbb R}\int_{\mathbb R^+}[SAST^{{\mathbf N}}_{\Psi}f(t)](a,b)\overline{g(t)}\Psi_{{{\mathbf N}},a,b}(t)dadbdt\\
&=&\frac{1}{\sqrt{2\pi}}\left\langle\int_{\mathbb R}\int_{\mathbb R^+}[SAST^{{\mathbf N}}_{\Psi}f(t)](a,b)\Psi_{{{\mathbf N}},a,b}(t)dadb, \quad{g(t)}\right\rangle.\\
\end{eqnarray*}
As $g\in L^2(\mathbb R)$ is arbitrary, it follows that
\begin{equation*}
f(t)=\frac{\sqrt{2\pi}}{\mathcal C_\Psi}\int_{\mathbb R}\int_{\mathbb R^+}[SAST^{{\mathbf N}}_{\Psi}f(t)](a,b)\Psi_{{{\mathbf N}},a,b}(t)dadb, \quad a.e.
\end{equation*}
This completes the proof.
\end{proof}

Next, we shall present a complete characterization of the range of the proposed special affine
Stockwell transform (\ref{eqn qpst}). The structure of the range of the SAST is obtained by using the
inversion formula (\ref{eqn thm inv}) and the well known Fubini's theorem.

\begin{theorem}[Characterization of range of $SAST^{{\mathbf N}}_\Psi$ ] Let $\Psi$ be an admissible window function in $L^2(\mathbb R)$, then a function $f\in L^2(\mathbb R^+\times \mathbb R)$ lies in the range of $SAST^{{\mathbf N}}_\Psi(L^2(\mathbb R))$ if and only if
\begin{equation}\label{range eqn}
f(c,d)={\mathcal C^{-1}_\Psi}\int_{\mathbb R}\int_{\mathbb R^+}f(a,b)\langle\Psi_{{{\mathbf N}},a,b},\, \Psi_{{{\mathbf N}},c,d}\rangle da db.
\end{equation}
\end{theorem}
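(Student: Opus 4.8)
\emph{Proof strategy.} The plan is to recognize the range of $SAST^{{\mathbf N}}_\Psi$ as a reproducing kernel Hilbert space and to obtain the reproducing identity (\ref{range eqn}) from the inversion formula of Theorem \ref{thm inver} together with Rayleigh's energy theorem (Theorem \ref{thm moyal}) and Corollary \ref{cor1}. For brevity, write $\langle\Psi_{{{\mathbf N}},a,b},\Psi_{{{\mathbf N}},c,d}\rangle$ for the candidate reproducing kernel.

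\emph{Necessity.} Suppose first $f=SAST^{{\mathbf N}}_\Psi g$ for some $g\in L^2(\mathbb R)$. By the inversion formula (\ref{eqn thm inv}),
\[
g(t)=\frac{\sqrt{2\pi}}{\mathcal C_\Psi}\int_{\mathbb R}\int_{\mathbb R^+}f(a,b)\,\Psi_{{{\mathbf N}},a,b}(t)\,da\,db .
\]
Substituting this into $f(c,d)=[SAST^{{\mathbf N}}_\Psi g](c,d)=\frac{1}{\sqrt{2\pi}}\int_{\mathbb R}g(t)\overline{\Psi_{{{\mathbf N}},c,d}(t)}\,dt$ and interchanging the order of integration by Fubini's theorem yields
\[
f(c,d)=\mathcal C_\Psi^{-1}\int_{\mathbb R}\int_{\mathbb R^+}f(a,b)\left(\int_{\mathbb R}\Psi_{{{\mathbf N}},a,b}(t)\overline{\Psi_{{{\mathbf N}},c,d}(t)}\,dt\right)da\,db
=\mathcal C_\Psi^{-1}\int_{\mathbb R}\int_{\mathbb R^+}f(a,b)\,\langle\Psi_{{{\mathbf N}},a,b},\Psi_{{{\mathbf N}},c,d}\rangle\,da\,db ,
\]
which is exactly (\ref{range eqn}). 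The interchange is justified by the Cauchy--Schwarz bound $|\langle\Psi_{{{\mathbf N}},a,b},\Psi_{{{\mathbf N}},c,d}\rangle|\le\|\Psi_{{{\mathbf N}},a,b}\|\,\|\Psi_{{{\mathbf N}},c,d}\|$ together with the $L^2(\mathbb R^+\times\mathbb R)$-membership of $f$ supplied by Corollary \ref{cor1}.

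\emph{Sufficiency.} Conversely, assume $f\in L^2(\mathbb R^+\times\mathbb R)$ satisfies (\ref{range eqn}). Define $g$ by the weak (Bochner) integral $g=\frac{\sqrt{2\pi}}{\mathcal C_\Psi}\int_{\mathbb R}\int_{\mathbb R^+}f(a,b)\,\Psi_{{{\mathbf N}},a,b}\,da\,db$, i.e.\ for $h\in L^2(\mathbb R)$ put $\langle g,h\rangle=\frac{\sqrt{2\pi}}{\mathcal C_\Psi}\int_{\mathbb R}\int_{\mathbb R^+}f(a,b)\,\langle\Psi_{{{\mathbf N}},a,b},h\rangle\,da\,db$. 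Since $\langle\Psi_{{{\mathbf N}},a,b},h\rangle=\sqrt{2\pi}\,\overline{[SAST^{{\mathbf N}}_\Psi h](a,b)}$, Cauchy--Schwarz together with Corollary \ref{cor1} gives $|\langle g,h\rangle|\le\sqrt{2\pi/\mathcal C_\Psi}\,\|f\|_{L^2(\mathbb R^+\times\mathbb R)}\|h\|_{L^2(\mathbb R)}$, so this weak integral defines a bounded functional and hence, by the Riesz representation theorem, a genuine element $g\in L^2(\mathbb R)$. Finally, for every $(c,d)$,
\[
[SAST^{{\mathbf N}}_\Psi g](c,d)=\frac{1}{\sqrt{2\pi}}\langle g,\Psi_{{{\mathbf N}},c,d}\rangle
=\mathcal C_\Psi^{-1}\int_{\mathbb R}\int_{\mathbb R^+}f(a,b)\,\langle\Psi_{{{\mathbf N}},a,b},\Psi_{{{\mathbf N}},c,d}\rangle\,da\,db=f(c,d),
\]
the last equality being the hypothesis (\ref{range eqn}). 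Thus $f=SAST^{{\mathbf N}}_\Psi g\in SAST^{{\mathbf N}}_\Psi(L^2(\mathbb R))$, which completes the argument.

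\emph{Main obstacle.} The algebra of the two substitutions is routine; the genuinely delicate points are (i) justifying the Fubini interchange in the necessity direction and (ii) making the weak-integral definition of $g$ rigorous and verifying that $g\in L^2(\mathbb R)$. Both hinge on the $L^2$-boundedness of $SAST^{{\mathbf N}}_\Psi$, which is precisely the content of Rayleigh's energy theorem and Corollary \ref{cor1}; once that estimate is invoked, the remainder is bookkeeping with Fubini's theorem and the Riesz representation theorem.
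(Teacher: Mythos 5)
Your proposal is correct and follows essentially the same route as the paper: necessity by inserting the inversion formula for $g$ into $[SAST^{{\mathbf N}}_\Psi g](c,d)$ and applying Fubini, and sufficiency by defining the same synthesis candidate $g=\frac{\sqrt{2\pi}}{\mathcal C_\Psi}\int\int f(a,b)\Psi_{{\mathbf N},a,b}\,da\,db$ and checking $SAST^{{\mathbf N}}_\Psi g=f$ from the hypothesis. The only difference is cosmetic: where the paper simply asserts $\|g\|^2=\mathcal C_\Psi^{-1}\|f\|^2<\infty$, you justify $g\in L^2(\mathbb R)$ via the weak-integral bound and the Riesz representation theorem, which is a slightly more careful rendering of the same step.
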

\begin{proof}
Let $f\in SAST^{{\mathbf N}}_\Psi(L^2(\mathbb R))$, then there is function $g\in L^2(\mathbb R)$ such that\\
 $[SAST^{{\mathbf N}}_\Psi g]=f.$ We claim that $f$ satisfies (\ref{range eqn}). To show this, we proceed as
 \begin{eqnarray*}
 f(c,d)&=&[SAST^{{\mathbf N}}_\Psi g](c,d)\\
 &=&\frac{1}{\sqrt{2\pi}}\int_{\mathbb R}g(t)\overline{\Psi_{{{\mathbf N}},c,d}(t)}dt\\
  &=&\frac{1}{\sqrt{2\pi}}\int_{\mathbb R}\left\{ \frac{\sqrt{2\pi}}{\mathcal C_\Psi}\int_{\mathbb R}\int_{\mathbb R^+}[SAST^{{\mathbf N}}_{\Psi}g(t)](a,b)\Psi_{{{\mathbf N}},a,b}(t)dadb \right\}\overline{\Psi_{{{\mathbf N}},c,d}(t)}dt\\
   &=&{\mathcal C^{-1}_\Psi}\int_{\mathbb R}\int_{\mathbb R^+}[SAST^{{\mathbf N}}_{\Psi}g(t)](a,b)\left\{ \int_{\mathbb R}\Psi_{{{\mathbf N}},a,b}(t)\overline{\Psi_{{{\mathbf N}},c,d}(t)}dt \right\}dadb\\
   &=&{\mathcal C^{-1}_\Psi}\int_{\mathbb R}\int_{\mathbb R^+}f(a,b)\langle\Psi_{{{\mathbf N}},a,b},\, \Psi_{{{\mathbf N}},c,d}\rangle da db.
 \end{eqnarray*}
 Conversely, let $f\in L^2(\mathbb R^+\times \mathbb R)$ satisfies (\ref{range eqn}). Then, we shall prove that $f\in SAST^{{\mathbf N}}_\Psi(L^2(\mathbb R))$; i.e., there must exists $g\in L^2(\mathbb R)$ such that $[SAST^{{\mathbf N}}_\Psi g]=f.$ We claim that
 \begin{equation*}
 g(t)= \frac{\sqrt{2\pi}}{\mathcal C_\Psi}\int_{\mathbb R}\int_{\mathbb R^+}f(a,b)\Psi_{{{\mathbf N}},a,b}(t)dadb.
 \end{equation*}
 It is easy to prove that $\|g\|^2=\mathcal C^{-1}_\Psi\|f\|^2<\infty,$ implies $g\in L^2(\mathbb R).$\\
 Moreover, by the virtue of Fubini’s theorem, we have
 \begin{eqnarray*}
 [SAST^{{\mathbf N}}_\Psi g](c,d)&=&\frac{1}{\sqrt{2\pi}}\int_{\mathbb R}g(t)\overline{\Psi_{{{\mathbf N}},c,d}(t)}dt\\
 &=&\frac{1}{\sqrt{2\pi}}\int_{\mathbb R}\left\{ \frac{\sqrt{2\pi}}{\mathcal C_\Psi}\int_{\mathbb R}\int_{\mathbb R^+}f(a,b)\Psi_{{{\mathbf N}},a,b}(t)dadb\right\}\overline{\Psi_{{{\mathbf N}},c,d}(t)}dt\\
 &=&{\mathcal C^{-1}_\Psi}\int_{\mathbb R}\int_{\mathbb R^+}f(a,b)\langle\Psi_{{{\mathbf N}},a,b},\,\Psi_{{{\mathbf N}},c,d}\rangle dadb\\
 &=& f(c,d).
 \end{eqnarray*}
 This completes the proof.
\end{proof}
\section{The relationship between the Special Affine Scaled Wigner distribution and the SAST}\label{secrel}
The Wigner distribution (WD), which has gained much popularity in recent years  as it has emerged as the premier tool for the analysis of both mono-component and bi-component non-stationary LFM signals \cite{st52,st53}. It is well known that
 the WD offers  perfect localization (localized on a straight line) to the mono-component LFM signals but while dealing with  multi-component signals cross terms appear because of their quadratic nature.
  Zhang et al.\cite{zhswd} introduced a scaled variant of the conventional  WD known as the scaled Wigner distribution (SWD) which is parameterized by a constant $k\in \mathbb {Q^+}$. The SWD gives a
novel way for the improvement of  the cross-term reduction  time–frequency resolution and angle resolution when
dealing with the multi-component LFM signal.
Recently Bhat and Dar \cite{scale} introduced the special affine scaled Wigner distribution (SASWD) by extending the
WD associated with special affine Fourier transform to the
novel one. The SASWD can be productive for signal processing theory and applications especially for detection and estimation of LFM signals.Thus it is significant to understand the its relationships with the neoteric time-frequency tools. Taking this opportunity, we shall  establish a fundamental relationship between the proposed SAST  and SASWD.
\begin{definition}\cite{scale}
Let $ {{\mathbf N}}=(A,B,C,D;p,q)$ be the matrix parameter and $f(t)$ be the finite energy signal in $L^2(\mathbb R),$ then the special affine scaled Wigner distribution (SASWD) is defined by
\begin{equation}\label{wvd}
\mathcal W^{{{\mathbf N}},k}_{f}(t,u)=\frac{1}{2\pi B}\int_{\mathbb R}f\left(t+k\frac{{\mathbf \tau}}{2}\right)\overline{f\left(t-k\frac{\tau}{2}\right)}e^{i\frac{\tau}{B}(Akt+kp-u)}d\tau,\, k\in\mathbb Q^+\\
\end{equation}
where $f\left(t+k\frac{{\mathbf \tau}}{2}\right)\overline{f\left(t-k\frac{\tau}{2}\right)}$ is the fractional instantaneous auto-correlation
function.\\
Analogous to (\ref{wvd}), the  cross-special affine scaled Wigner distribution of
the finite energy signals $f$ and $g$ can be defined as
\begin{equation}\label{cross wvd}
\mathcal W^{{{\mathbf N}},k}_{f,g}(t,u)=\frac{1}{2\pi B}\int_{\mathbb R}f\left(t+k\frac{{\mathbf \tau}}{2}\right)\overline{g\left(t-k\frac{\tau}{2}\right)}e^{i\frac{\tau}{B}(Akt+kp-u)}d\tau,\\
\end{equation}
where $k\in\mathbb Q^+.$
\end{definition}

In the following theorem, we develop a relationship between the cross-special affine scaled Wigner distribution  (\ref{cross wvd}) and the
proposed special affine Stockwell transform (\ref{eqn def qpst}).

\begin{theorem}Let $\mathcal W^{{{\mathbf N}},k}_{f,g}$ be  the cross-special affine scaled Wigner distribution of two signals $f,g\in L^2(\mathbb R)$ and $[SAST^{{\mathbf N}}_\Psi f](a,b)$ be the SAST of $f$ with respect  an  admissible window function $\Psi\in L^2(\mathbb R)$. Then we have
\begin{eqnarray}\label{eqn th wvd}
\nonumber&&\mathcal W^{{{\mathbf N}},k}_{f,g}(t,u)\\
\nonumber&&=\frac{2 e^{-i\frac{2}{Bk}[3Akt+kp-u]t}}{Bk.\mathcal C_\Psi}\int_{\mathbb R}\int_{\mathbb R^+}[SAST^{{\mathbf N}}_\Psi({\mathcal M}_\frac{-2u}{Bk}F)](a,b)\\
\nonumber&&\qquad\qquad\qquad\qquad\qquad\qquad\times\overline{[SAST^{{\mathbf N}}_\Psi ({\mathcal M}_{2\frac{A}{B}t}g)](-a,-(b+2t))}e^{-2it(a+\frac{A}{B}b)}dadb,\\
\end{eqnarray}
where $F(y)=e^{i\frac{2}{B}(At+p)y}f(y).$
\end{theorem}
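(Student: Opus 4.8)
The plan is to obtain (\ref{eqn th wvd}) from Rayleigh's energy theorem for the SAST (Theorem \ref{thm moyal}), after a change of variable in the defining integral of $\mathcal W^{{\mathbf N},k}_{f,g}$ that clears the scaling parameter $k$ out of the transform matrix and turns the bilinear auto-correlation into a genuine $L^{2}(\mathbb R)$ inner product; this is the right route because the right-hand side of (\ref{eqn th wvd}) carries \emph{two} SAST factors, which is exactly what Rayleigh's theorem produces out of an inner product.

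First I would substitute $y=t+\tfrac{k\tau}{2}$ in (\ref{cross wvd}). Since $d\tau=\tfrac{2}{k}\,dy$, $t-\tfrac{k\tau}{2}=2t-y$ and $\tfrac{\tau}{B}(Akt+kp-u)=\tfrac{2(y-t)}{Bk}(Akt+kp-u)$, this recasts the distribution as
\[
\mathcal W^{{\mathbf N},k}_{f,g}(t,u)=\frac{1}{\pi Bk}\int_{\mathbb R}f(y)\,\overline{g(2t-y)}\,e^{\frac{2i(y-t)}{Bk}(Akt+kp-u)}\,dy .
\]
Splitting the exponential as $e^{\frac{2iy}{Bk}(Akt+kp-u)}\cdot e^{-\frac{2it}{Bk}(Akt+kp-u)}$ and using $F(y)=e^{\frac{2i}{B}(At+p)y}f(y)$, the $y$-dependent factor absorbs into $f(y)$ to give exactly $\big({\mathcal M}_{-2u/(Bk)}F\big)(y)$, while the surviving $t,u$-factor is $e^{-\frac{2i}{Bk}(Akt^{2}+kpt-ut)}$, to be completed later into the prefactor of (\ref{eqn th wvd}).

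Next I would read what remains as the inner product $\big\langle {\mathcal M}_{-2u/(Bk)}F,\ g(2t-\,\cdot\,)\big\rangle$ and apply Rayleigh's energy theorem (Theorem \ref{thm moyal}), which rewrites it as $\tfrac{2\pi}{\mathcal C_\Psi}$ times $\int_{\mathbb R}\int_{\mathbb R^{+}}\big[SAST^{{\mathbf N}}_\Psi({\mathcal M}_{-2u/(Bk)}F)\big](a,b)\,\overline{\big[SAST^{{\mathbf N}}_\Psi(g(2t-\,\cdot\,))\big](a,b)}\,da\,db$; this is legitimate since $|F|=|f|$ forces both arguments into $L^{2}(\mathbb R)$ and $\Psi$ is admissible, and Fubini's theorem covers the interchange. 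Combining $\tfrac{1}{\pi Bk}\cdot\tfrac{2\pi}{\mathcal C_\Psi}=\tfrac{2}{Bk\,\mathcal C_\Psi}$ already yields the numerical constant of (\ref{eqn th wvd}). Finally, writing $g(2t-\,\cdot\,)={\mathbf T}_{2t}\big(g(-\,\cdot\,)\big)$ and invoking the translation and parity properties of Theorem \ref{thm p1}(iii),(v), I would transform $\big[SAST^{{\mathbf N}}_\Psi(g(2t-\,\cdot\,))\big](a,b)$ into $\big[SAST^{{\mathbf N}}_\Psi({\mathcal M}_{2At/B}g)\big](-a,-(b+2t))$ multiplied by an exponential that is quadratic in $t$ and bilinear in $(t,a)$ and $(t,b)$; after conjugation, its $t$-only part merges with $e^{-\frac{2i}{Bk}(Akt^{2}+kpt-ut)}$ into $e^{-\frac{2i}{Bk}[3Akt+kp-u]t}$, while the factor $e^{-2it(a+\frac{A}{B}b)}$ stays inside the double integral, giving precisely (\ref{eqn th wvd}).

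I expect the principal obstacle to be the phase bookkeeping rather than anything conceptual: the change of variable, the passage from $f$ to $F$, and — most delicately — the successive use of the translation and parity rules (in which the modulation ${\mathcal M}_{2At/B}$ is transported through a reflection and a $b$-shift) each contribute quadratic and bilinear exponentials in $t,a,b$, and one must verify that they recombine without slippage into the prefactor $e^{-\frac{2i}{Bk}[3Akt+kp-u]t}$ and the internal factor $e^{-2it(a+\frac{A}{B}b)}$. Once those phases are tracked correctly, everything else reduces to the previously established properties.
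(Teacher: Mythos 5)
Your outline follows essentially the same route as the paper's proof: the same substitution $y=t+\tfrac{k\tau}{2}$, the same absorption of the $y$-dependent phase into $\mathcal M_{-2u/(Bk)}F$ with $F(y)=e^{\frac{2i}{B}(At+p)y}f(y)$, and the same use of the translation and parity properties of Theorem \ref{thm p1} to turn the factor involving $g(2t-\cdot)$ into $\overline{[SAST^{{\mathbf N}}_\Psi(\mathcal M_{2\frac{A}{B}t}g)](-a,-(b+2t))}$ together with the phases $e^{-2it(a+\frac{A}{B}b)}$ and $e^{-\frac{2i}{Bk}[3Akt+kp-u]t}$. The only cosmetic difference is that you apply Rayleigh's theorem (Theorem \ref{thm moyal}) directly to the inner product, whereas the paper expands $g(2t-y)$ by the inversion formula (Theorem \ref{thm inver}) and then uses Fubini to recognize the second SAST factor; since the inversion formula is itself derived from Rayleigh's identity, the two packagings are equivalent.
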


\begin{proof}
For any $f,g\in L^2(\mathbb R),$ the cross-special affine scaled Wigner distribution can be expressed as
\begin{eqnarray}
\nonumber\mathcal \mathcal W^{{{\mathbf N}},k}_{f,g}(t,u)=\frac{1}{2\pi B}\int_{\mathbb R}f\left(t+k\frac{{\mathbf \tau}}{2}\right)\overline{g\left(t-k\frac{\tau}{2}\right)}e^{i\frac{\tau}{B}(Akt+kp-u)}d\tau\\
\label{wvd 1}\buildrel\rm t+\frac{k\tau}{2}=y \over=\frac{e^{-i\frac{2}{Bk}(Akt+kp-u)t}}{\pi Bk}\int_{\mathbb R}f\left(y\right)\overline{g\left(2t-y\right)}e^{i\frac{2}{Bk}(Akt+kp-u)y}dy.
\end{eqnarray}
By utilizing the Theorem \ref{thm p1} and inversion formula \ref{eqn thm inv} of SAST, we get
\begin{eqnarray*}
&&g(2t-y)\\
&&=\frac{\sqrt{2\pi}}{\mathcal C_\Psi}\int_{\mathbb R}\int_{\mathbb R^+}[SAST^{{\mathbf N}}_\Psi g(2t-y)](a,b)\Psi_{{{\mathbf N}},a,b}(y)dadb\\
&&=\frac{\sqrt{2\pi}}{\mathcal C_\Psi}\int_{\mathbb R}\int_{\mathbb R^+}e^{2i[at+\frac{A}{B}t(b+2t)]}[SAST^{{\mathbf N}}_\Psi ({\mathcal M}_{-2\frac{A}{B}t}g(-y))](a,b+2t)\Psi_{{{\mathbf N}},a,b}(y)dadb\\
&&=\frac{\sqrt{2\pi}}{\mathcal C_\Psi}\int_{\mathbb R}\int_{\mathbb R^+}e^{2i[at+\frac{A}{B}t(b+2t)]}[SAST^{{\mathbf N}}_\Psi ({\mathcal M}_{2\frac{A}{B}t}g)](-a,-(b+2t))\Psi_{{{\mathbf N}},a,b}(y)dadb\\
\end{eqnarray*}
Substituting the above estimate in (\ref{wvd 1})and using Fubini’s theorem, it yields
\begin{eqnarray*}
&&\mathcal W^{{{\mathbf N}},k}_{f,g}(t,u)\\
&&=\frac{\sqrt{2\pi}e^{-i\frac{2}{Bk}(Akt+kp-u)t}}{\pi Bk.\mathcal C_\Psi}\int_{\mathbb R}e^{-4i\frac{A}{B}t^2}f\left(y\right)e^{i\frac{2}{Bk}(Akt+kp-u)y}\left\{\int_{\mathbb R}\int_{\mathbb R^+} e^{-2it(a+\frac{A}{B}b)}\right.\\
&&\qquad\qquad\qquad\times \left.\overline{[SAST^{{\mathbf N}}_\Psi ({\mathcal M}_{2\frac{A}{B}t}g)](-a,-(b+2t))}\overline{\Psi_{{{\mathbf N}},a,b}(y)}dadb\right\}dy\\
&&=\frac{\sqrt{2\pi}e^{-i\frac{2}{Bk}(Akt+kp-u)t}}{\pi Bk.\mathcal C_\Psi}\int_{\mathbb R}e^{-4i\frac{A}{B}t^2}\left\{e^{i\frac{2}{B}(At+p)y}f\left(y\right)\right\}e^{-i\frac{2u}{Bk}y}\left\{\int_{\mathbb R}\int_{\mathbb R^+} e^{-2it(a+\frac{A}{B}b)}\right.\\
&&\qquad\qquad\qquad\times \left.\overline{[SAST^{{\mathbf N}}_\Psi ({\mathcal M}_{2\frac{A}{B}t}g)](-a,-(b+2t))}\overline{\Psi_{{{\mathbf N}},a,b}(y)}dadb\right\}dy\\
&&=\frac{\sqrt{2\pi}e^{-i\frac{2}{Bk}[3Akt+kp-u]t}}{\pi Bk.\mathcal C_\Psi}\int_{\mathbb R}\int_{\mathbb R^+}e^{-2it(a+\frac{A}{B}b)}\overline{[SAST^{{\mathbf N}}_\Psi ({\mathcal M}_{2\frac{A}{B}t}g)](-a,-(b+2t))}\\
&&\qquad\qquad\qquad\times \sqrt{2\pi}\left\{\frac{1}{\sqrt{2\pi}}\int_{\mathbb R}\left({\mathcal M}_{\frac{-2u}{Bk}}\left\{e^{i\frac{2}{B}(At+p)y}f\right\}\right)(y) \overline{\Psi_{{{\mathbf N}},a,b}(y)}dy\right\}dadb\\
&&=\frac{2 e^{-i\frac{2}{Bk}[3Akt+kp-u]t}}{Bk.\mathcal C_\Psi}\int_{\mathbb R}\int_{\mathbb R^+}e^{-2it(a+\frac{A}{B}b)}\overline{[SAST^{{\mathbf N}}_\Psi ({\mathcal M}_{2\frac{A}{B}t}g)](-a,-(b+2t))}\\
&&\qquad\qquad\qquad\qquad\qquad\qquad\times [SAST^{{\mathbf N}}_\Psi({\mathcal M}_{\frac{-2u}{Bk}}F)](a,b)dadb\\
&&=\frac{2 e^{-i\frac{2}{Bk}[3Akt+kp-u]t}}{Bk.\mathcal C_\Psi}\int_{\mathbb R}\int_{\mathbb R^+}[SAST^{{\mathbf N}}_\Psi({\mathcal M}_\frac{-2u}{Bk}F)](a,b)\\
&&\qquad\qquad\qquad\qquad\qquad\qquad\qquad\times\overline{[SAST^{{\mathbf N}}_\Psi ({\mathcal M}_{2\frac{A}{B}t}g)](-a,-(b+2t))}e^{-2it(a+\frac{A}{B}b)}dadb,
\end{eqnarray*}
where $F(y)=e^{i\frac{2}{B}(At+p)y}f(y).$\\
This completes the proof.
\end{proof}

\section{Uncertainty principles for the quadratic-phase Stockwell transform }\label{secucp}
The SAST can be regarded as the generalized version of the SAFT which has wide applications in optics and signal processing. On the other hand the uncertainty principles (UCPs)viz: Heisenberg’s UCP, logarithmic UCP, Nazarov’s UCP and etc,  owns their specific form for every time-frequency tool so  they can not be avoided.
 Keeping in mind the
 the  contemporary developments in the theory of uncertainty principles, in this section we shall formulate certain novel
 UCPs associated with the proposed SAST $[SAST^{{\mathbf N}}_\Psi f](a,b),$
 when regarded as a function of variable $b$. These new UCPs are governed by the matrix parameter  ${{\mathbf N}}=(A,B,C,D;p,q)$   and
are fruitful to understand mutual relations among different transform domains. Let us start with Heisenberg's UCP  for the proposed SAST.

\begin{theorem}[Heisenberg’s UCP]\label{thm hsb ucp} Let $[SAST^{{\mathbf N}}_\Psi f](a,b)$ be the SAST of a signal $f(t)\in L^2(\mathbb R)$ with respect to the admissible window function $\Psi\in L^2(\mathbb R),$ then the following uncertainty inequality holds:
\begin{eqnarray}\label{eqn hsb thm}
\nonumber&&\left\{\int_{\mathbb R}\int_{\mathbb R^+}b^2\left|[SAST^{{\mathbf N}}_\Psi f](a,b)\right|^2dadb\right\}^{1/2}\left\{\int_{\mathbb R}w^2\left|\mathbf S_{{\mathbf N}}[f]\left(w\right)\right|^2dw\right\}^{1/2}\ge \frac{B.\sqrt{\mathcal C_\Psi}}{2}\|f\|^2.\\
\end{eqnarray}
\end{theorem}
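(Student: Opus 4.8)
The plan is to reduce Heisenberg's UCP for the SAST to the classical Heisenberg inequality for the special affine Fourier transform. The bridge between the two is Lemma \ref{lem ucp}, which expresses $\mathbf S_{{\mathbf N}}\big[[SAST^{{\mathbf N}}_\Psi f](a,b)\big](w)$ (the SAFT taken in the variable $b$) as a product of $\big|\mathbf S_{{\mathbf N}}[f](w+aB)\big|$ and a factor depending only on the window. First I would recall the classical Heisenberg inequality for the SAFT: for any $h\in L^2(\mathbb R)$,
\begin{equation*}
\left\{\int_{\mathbb R}b^2|h(b)|^2\,db\right\}^{1/2}\left\{\int_{\mathbb R}w^2\left|\mathbf S_{{\mathbf N}}[h](w)\right|^2\,dw\right\}^{1/2}\ge \frac{|B|}{2}\|h\|^2,
\end{equation*}
which follows from the ordinary Heisenberg inequality after absorbing the chirp factors of the SAFT kernel (these do not affect the modulus, hence not the second-moment integral). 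I would apply this with $h(b)=[SAST^{{\mathbf N}}_\Psi f](a,b)$, for each fixed scale $a$.

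Next I would integrate the squared inequality over $a\in\mathbb R^+$. Applying the elementary inequality that the product of $L^2$-type quantities dominates (via Cauchy--Schwarz in the $a$-variable, or simply by integrating the pointwise-in-$a$ inequality and using Minkowski), one gets
\begin{equation*}
\left\{\int_{\mathbb R}\int_{\mathbb R^+}b^2\left|[SAST^{{\mathbf N}}_\Psi f](a,b)\right|^2dadb\right\}^{1/2}
\left\{\int_{\mathbb R}\int_{\mathbb R^+}w^2\left|\mathbf S_{{\mathbf N}}\big[[SAST^{{\mathbf N}}_\Psi f](a,b)\big](w)\right|^2dadw\right\}^{1/2}
\ge \frac{|B|}{2}\left\|[SAST^{{\mathbf N}}_\Psi f]\right\|^2_{L^2(\mathbb R^+\times\mathbb R)}.
\end{equation*}
By Corollary \ref{cor1}, the right-hand norm equals $\tfrac{\mathcal C_\Psi}{2\pi}\|f\|^2$. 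It then remains to identify the second factor on the left with $\tfrac{1}{2\pi}\mathcal C_\Psi\int_{\mathbb R}w^2|\mathbf S_{{\mathbf N}}[f](w)|^2dw$ up to the $2\pi$-normalization, by substituting the expression from Lemma \ref{lem ucp}: after the change of variable $w\mapsto w-aB$ in the $w$-integral and Fubini, the window factor integrates in $a$ to exactly $\mathcal C_\Psi$ (this is the admissibility integral of Theorem \ref{thm adm}), leaving $\int_{\mathbb R}w^2|\mathbf S_{{\mathbf N}}[f](w)|^2dw$. Taking square roots and combining the constants yields the claimed bound $\tfrac{B\sqrt{\mathcal C_\Psi}}{2}\|f\|^2$.

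The main obstacle I anticipate is the bookkeeping in that last identification step: the shift $w\to w+aB$ inside $\mathbf S_{{\mathbf N}}[f]$ means that $w^2$ becomes $(w-aB)^2$ after substitution, so naively one would pick up cross terms $-2aBw$ and $a^2B^2$ rather than cleanly $w^2$. One has to be careful here — either the theorem implicitly uses a symmetrized/centered version, or the cross terms must be controlled (the linear term integrates to something involving first moments and the quadratic term is nonnegative, so one still gets a valid \emph{lower} bound of the stated form, possibly after noting the inequality only weakens). I would handle this by keeping the inequality direction in mind throughout: dropping the nonnegative $a^2B^2$ contribution and the sign-indefinite cross term can only be arranged to preserve a lower bound if one argues via the triangle inequality on the second-moment functional, so the cleanest route is to invoke Lemma \ref{lem ucp} together with the Heisenberg inequality applied to the *modulated* signal and then simplify. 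The remaining steps — Fubini, the change of variables, and recognizing the admissibility constant — are routine.
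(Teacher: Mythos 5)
Your overall strategy is the same as the paper's: apply the SAFT Heisenberg inequality to $[SAST^{{\mathbf N}}_\Psi f](a,\cdot)$ for each fixed $a$, integrate over $a\in\mathbb R^+$, separate the left side by Cauchy--Schwarz and Fubini, evaluate the right side by Corollary \ref{cor1}, and then convert the factor $\int_{\mathbb R}\int_{\mathbb R^+}w^2\big|\mathbf S_{{\mathbf N}}\big[[SAST^{{\mathbf N}}_\Psi f](a,b)\big](w)\big|^2\,da\,dw$ into $\mathcal C_\Psi\int_{\mathbb R}w^2|\mathbf S_{{\mathbf N}}[f](w)|^2dw$ (up to normalization) using Lemma \ref{lem ucp}, the shift $w\mapsto w-aB$, the modulation property, and the admissibility integral of Theorem \ref{thm adm}. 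The paper carries out this last conversion as an exact identity, Eq.\ (\ref{5hsb qpft}), by the same manipulations used in the admissibility and Rayleigh computations, treating the weight $w^2$ as unaffected by the change of variables.

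The gap is precisely at the step you flag but do not resolve. After substituting $u=w+aB$ the weight becomes $(u-aB)^2=u^2-2aBu+a^2B^2$, so what you obtain is not $\mathcal C_\Psi\int u^2|\mathbf S_{{\mathbf N}}[f](u)|^2du$ but that quantity plus a sign-indefinite cross term and a nonnegative term. To finish along this route you need an \emph{upper} bound of the form $\int\int w^2\big|\mathbf S_{{\mathbf N}}\big[[SAST^{{\mathbf N}}_\Psi f](a,b)\big](w)\big|^2dadw\le C\,\mathcal C_\Psi\int w^2|\mathbf S_{{\mathbf N}}[f](w)|^2dw$, since only then does replacing the second factor preserve the lower bound $XY\ge c\Rightarrow XZ\ge c$ when $Y\le Z$. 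Your proposed fixes go the wrong way: dropping the nonnegative $a^2B^2$ contribution only makes the left side smaller (which does not yield $Y\le Z$), and the cross term $-2aBu$ has no definite sign, so neither discarding it nor a triangle-inequality argument controls it without extra hypotheses (e.g.\ symmetry or moment conditions on the window making the cross term vanish, or a recentered second moment in the statement). So, as written, your proposal reproduces the paper's argument up to its most delicate step and then leaves that step open; the identification of the frequency-moment factor, which the paper asserts as the equality (\ref{5hsb qpft}), is exactly what still has to be justified for the stated inequality to follow.
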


\begin{proof}
The Heisenberg's UCP for the SAFT reads \cite{1102s}
\begin{equation}\label{1hsb qpft}
\left\{\int_{\mathbb R}t^2|f(t)|^2dt\right\}^{1/2}\left\{\int_{\mathbb R}w^2|\mathbf S_{{\mathbf N}}[f](w)|^2dw\right\}^{1/2}\ge \frac{B}{2}\int_{\mathbb R}|f(t)|^2dt
\end{equation}
Identifying $[SAST^{{\mathbf N}}_\Psi f](a,b)$ as a function of translation parameter $b$ and replacing $f(t)$ by $[SAST^{{\mathbf N}}_\Psi f](a,b)$ in (\ref{1hsb qpft}), we have
\begin{eqnarray}
\nonumber&&\left\{\int_{\mathbb R}b^2\left|[SAST^{{\mathbf N}}_\Psi f](a,b)\right|^2db\right\}^{1/2}\left\{\int_{\mathbb R}w^2|\mathbf S_{{\mathbf N}}\left[SAST^{{\mathbf N}}_\Psi f](a,b)\right](w)|^2dw\right\}^{1/2}\\
\label{2hsb qpft}&&\qquad\qquad\qquad\qquad\ge \frac{B}{2}\int_{\mathbb R}\left|[SAST^{{\mathbf N}}_\Psi f](a,b)\right|^2db.
\end{eqnarray}
First integrating (\ref{2hsb qpft}) with respect to $da$ and then using Corollary \ref{cor1} to RHS, we get
\begin{eqnarray}
\nonumber&&\int_{\mathbb R^+}\left[\left\{\int_{\mathbb R}b^2\left|[SAST^{{\mathbf N}}_\Psi f](a,b)\right|^2db\right\}^{1/2}\left\{\int_{\mathbb R}w^2|\mathbf S_{{\mathbf N}}\left[SAST^{{\mathbf N}}_\Psi f](a,b)\right](w)|^2dw\right\}^{1/2}\right]da\\
\nonumber&&\qquad\qquad\qquad\qquad\ge \frac{B}{2}\int_{\mathbb R^+}\int_{\mathbb R}\left|[SAST^{{\mathbf N}}_\Psi f](a,b)\right|^2dadb\\
\label{3hsb qpft}&&\qquad\qquad\qquad\qquad\ge \frac{B.\mathcal C_\Psi}{4\pi}\|f\|^2.
\end{eqnarray}
Implementing Cauchy-Schwartz's inequality and Fubini's theorem, (\ref{3hsb qpft}) yields
\begin{eqnarray}
\nonumber&&\left\{\int_{\mathbb R}\int_{\mathbb R^+}b^2\left|[SAST^{{\mathbf N}}_\Psi f](a,b)\right|^2dadb\right\}^{1/2}\left\{\int_{\mathbb R}\int_{\mathbb R^+}w^2|\mathbf S_{{\mathbf N}}\left[SAST^{{\mathbf N}}_\Psi f](a,b)\right](w)|^2dadw\right\}^{1/2}\\
\label{4hsb qpft}&&\qquad\qquad\qquad\qquad\ge \frac{B.\mathcal C_\Psi}{4\pi}\|f\|^2.
\end{eqnarray}
Using Lemma \ref{lem ucp} and invoking translation invariance of the Lebesgue integral and modulation property of SAFT, we have
\begin{eqnarray}
\nonumber&&\int_{\mathbb R}\int_{\mathbb R^+}w^2|\mathbf S_{{\mathbf N}}\left[SAST^{{\mathbf N}}_\Psi f](a,b)\right](w)|^2dadw\\
\nonumber&&\qquad=\frac{1}{\sqrt{2\pi}}\int_{\mathbb R}\int_{\mathbb R^+}w^2\left|\mathbf S_{{\mathbf N}}[f]\left(w+{a}{B}\right)\overline{\mathbf S_{{\mathbf N}} \left[ e^{i\left\{-t\left(1-\frac{1}{a}\right)\frac{p}{B}-\frac{At^2}{2B}\left(1+\frac{1}{a^2}\right)\right\}}\Psi(t)\right]}\left(\frac{w}{a}\right)\right|^2dadw\\
\nonumber&&\qquad=\frac{1}{\sqrt{2\pi}}\int_{\mathbb R}w^2\left|\mathbf S_{{\mathbf N}}[f]\left(w\right)\right|^2\left(\int_{\mathbb R^+}\left|{\mathbf S_{{\mathbf N}} \left[ e^{i\left\{-t\left(1-\frac{1}{a}\right)\frac{p}{B}-\frac{At^2}{2B}\left(1+\frac{1}{a^2}\right)\right\}}\Psi(t)\right]}\left(\frac{w-a}{a}\right)\right|^2da\right)dw\\
\nonumber&&\qquad=\frac{1}{\sqrt{2\pi}}\int_{\mathbb R}w^2\left|\mathbf S_{{\mathbf N}}[f]\left(w\right)\right|^2\left(\int_{\mathbb R^+}\left|{\mathbf S_{{\mathbf N}} \left[ e^{i\left\{-t\left(1-\frac{1}{a}\right)\frac{p}{B}-\frac{At^2}{2B}\left(1+\frac{1}{a^2}\right)\right\}}e^{it}\Psi(t)\right]}\left(\frac{w}{a}\right)\right|^2da\right)dw\\
\nonumber&&\qquad=\frac{1}{\sqrt{2\pi}}\int_{\mathbb R}w^2\left|\mathbf S_{{\mathbf N}}[f]\left(w\right)\right|^2\left(\int_{\mathbb R^+}\left|{\mathbf S_{{\mathbf N}} \left[ e^{i\left\{t\left(1-\left(1-\frac{1}{a}\right)\frac{p}{B}\right)-\frac{At^2}{2B}\left(1+\frac{1}{a^2}\right)\right\}}\Psi(t)\right]}\left(\frac{w}{a}\right)\right|^2da\right)dw\\
\label{5hsb qpft}&&\qquad\buildrel\rm (using (\ref{3.16})) \over=\frac{\mathcal C_\Psi}{\sqrt{2\pi}}\int_{\mathbb R}w^2\left|\mathbf S_{{\mathbf N}}[f]\left(w\right)\right|^2dw.
\end{eqnarray}
Inserting (\ref{5hsb qpft}) in (\ref{4hsb qpft}), we obtain

\begin{eqnarray*}
\left\{\int_{\mathbb R}\int_{\mathbb R^+}b^2\left|[SAST^{{\mathbf N}}_\Psi f](a,b)\right|^2dadb\right\}^{1/2}\left\{\frac{\mathcal C_\Psi}{\sqrt{2\pi}}\int_{\mathbb R}w^2\left|\mathbf S_{{\mathbf N}}[f]\left(w\right)\right|^2dw\right\}^{1/2}\ge \frac{B.\mathcal C_\Psi}{4\pi}\|f\|^2.
\end{eqnarray*}
On further simplification, above inequality yields the desired
result
\begin{eqnarray*}
\left\{\int_{\mathbb R}\int_{\mathbb R^+}b^2\left|[SAST^{{\mathbf N}}_\Psi f](a,b)\right|^2dadb\right\}^{1/2}\left\{\int_{\mathbb R}w^2\left|\mathbf S_{{\mathbf N}}[f]\left(w\right)\right|^2dw\right\}^{1/2}\ge \frac{B.\sqrt{\mathcal C_\Psi}}{2}\|f\|^2.
\end{eqnarray*}
This completes the proof.\\
\end{proof}

\begin{remark}From the UCP (\ref{eqn hsb thm}), we promptly infer that:\\
\begin{itemize}
\item For the parametric set ${{\mathbf N}}=(A,B,C,D;p,q),$ Theorem \ref{thm hsb ucp} yields Heisenberg UCP pertaining to linear canonical Stockwell transform :
\begin{eqnarray*}
\left\{\int_{\mathbb R}\int_{\mathbb R^+}b^2\left|[LCST^{{\mathbf N}}_\Psi f](a,b)\right|^2dadb\right\}\left\{\int_{\mathbb R}w^2\left|\mathbf S_{{\mathbf N}}[f]\left(w\right)\right|^2dw\right\}\ge \frac{B^2\mathcal C_\Psi}{4}\|f\|^4.
\end{eqnarray*}
\item  For the parametric set ${{\mathbf N}}=(\cos\theta,\sin\theta,-\sin\theta,\cos\theta;p,q),\quad \theta\ne n\pi$, Theorem \ref{thm hsb ucp} yields Heisenberg UCP  for fractional Stockwell transform \label{gst}
  \begin{eqnarray*}
\left\{\int_{\mathbb R}\int_{\mathbb R^+}b^2\left|[FrST^{{\mathbf N}}_\Psi f](a,b)\right|^2dadb\right\}\left\{\int_{\mathbb R}w^2\left|\mathbf S_{{\mathbf N}}[f]\left(w\right)\right|^2dw\right\}\ge \frac{\sin^2\theta\mathcal C_\Psi}{4}\|f\|^4.
\end{eqnarray*}
\item For the parametric set ${{\mathbf N}}=(0,1,-1,0;0,0),$ Theorem \ref{thm hsb ucp} yields Heisenberg UCP for the classical Stockwell transform (\ref{st2}).
    \begin{eqnarray*}
\left\{\int_{\mathbb R}\int_{\mathbb R^+}b^2\left|[ST^{{\mathbf N}}_\Psi f](a,b)\right|^2dadb\right\}\left\{\int_{\mathbb R}w^2\left|\mathbf S_{{\mathbf N}}[f]\left(w\right)\right|^2dw\right\}\ge \frac{\mathcal C_\Psi}{4}\|f\|^4.\\
\end{eqnarray*}
\end{itemize}
\end{remark}

In continuation, we shall derive the logarithmic UCP  for the proposed special affine Stockwell transform .

\begin{theorem}[Logarithmic UCP]\label{thm log ucp} Let $f(t)\in L^2(\mathbb R)$ be any signal and $\Psi\in L^2(\mathbb R)$ be the admissible window function, then the SAST of $f$  with respect to $\Psi$ denoted by
$[SAST^{{\mathbf N}}_\Psi f](a,b)$ satisfies the following uncertainty inequality holds:
\begin{eqnarray}
\nonumber&&\int_{\mathbb R}\int_{\mathbb R^+}\ln|b|\left|[SAST^{{\mathbf N}}_\Psi f](a,b)\right|^2dadb+\frac{\mathcal C_\Psi}{\sqrt{2\pi}}\int_{\mathbb R}\ln|w|\left|\mathbf S_{{\mathbf N}}[f]\left(w\right)\right|^2dw\\
&&\qquad\qquad\qquad\ge\frac{\mathcal C_\Psi}{2\pi}\left( \frac{\Gamma'(1/4)}{\Gamma(1/4)}+\ln|B|\right)\|f\|^2.
\end{eqnarray}
\end{theorem}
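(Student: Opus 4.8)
The plan is to transfer the logarithmic uncertainty principle known for the SAFT onto the proposed transform, following verbatim the template of the proof of Theorem \ref{thm hsb ucp}. Recall that the logarithmic UCP for the SAFT reads
\[
\int_{\mathbb R}\ln|t|\,|f(t)|^2\,dt+\int_{\mathbb R}\ln|w|\,\bigl|\mathbf S_{{\mathbf N}}[f](w)\bigr|^2\,dw\ge\left(\frac{\Gamma'(1/4)}{\Gamma(1/4)}+\ln|B|\right)\int_{\mathbb R}|f(t)|^2\,dt,
\]
established for Schwartz functions and extended by density. The idea is to apply this inequality, for each fixed $a\in\mathbb R^+$, to the function $b\mapsto[SAST^{{\mathbf N}}_\Psi f](a,b)$, which belongs to $L^2(\mathbb R)$ by Corollary \ref{cor1}, and then integrate over $a$. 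This yields, for every $a$,
\[
\int_{\mathbb R}\ln|b|\,\bigl|[SAST^{{\mathbf N}}_\Psi f](a,b)\bigr|^2db+\int_{\mathbb R}\ln|w|\,\bigl|\mathbf S_{{\mathbf N}}\bigl[[SAST^{{\mathbf N}}_\Psi f](a,\cdot)\bigr](w)\bigr|^2dw\ge\left(\frac{\Gamma'(1/4)}{\Gamma(1/4)}+\ln|B|\right)\int_{\mathbb R}\bigl|[SAST^{{\mathbf N}}_\Psi f](a,b)\bigr|^2db.
\]

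Next I would integrate both sides over $a\in\mathbb R^+$ and use Fubini's theorem. On the right-hand side, Corollary \ref{cor1} collapses the double integral to $\frac{\mathcal C_\Psi}{2\pi}\|f\|^2$, giving the constant $\frac{\mathcal C_\Psi}{2\pi}\bigl(\frac{\Gamma'(1/4)}{\Gamma(1/4)}+\ln|B|\bigr)\|f\|^2$; the first term on the left is already the desired $\ln|b|$--weighted energy of the SAST. For the remaining spectral term $\int_{\mathbb R^+}\!\int_{\mathbb R}\ln|w|\,\bigl|\mathbf S_{{\mathbf N}}[[SAST^{{\mathbf N}}_\Psi f](a,\cdot)](w)\bigr|^2dw\,da$ I would insert the modulus identity of Lemma \ref{lem ucp}, then carry out exactly the same change of variable (and its reabsorption into a modulation of $\Psi$) that was used to reach (\ref{5hsb qpft}) in the Heisenberg proof, and finally recognise the inner $a$--integral of the resulting window factor as the admissibility constant via (\ref{3.16}); this reduces the spectral term to $\frac{\mathcal C_\Psi}{\sqrt{2\pi}}\int_{\mathbb R}\ln|w|\,|\mathbf S_{{\mathbf N}}[f](w)|^2dw$. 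Assembling the three contributions gives precisely the asserted inequality, and no Cauchy--Schwarz step is needed, since the estimate is additive rather than multiplicative.

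I expect the delicate points to sit entirely in that last reduction. Because $\ln|\cdot|$ changes sign, interchanging the order of integration is not automatic: one should either split the $w$--integral over the regions $|w|\ge1$ and $|w|<1$ and control the negative part by monotone or dominated convergence, or first prove the estimate when $\mathbf S_{{\mathbf N}}[f]$ has compact support bounded away from the origin and then pass to the general case. One must also check that the weight $\ln|w|$ stays attached to $\mathbf S_{{\mathbf N}}[f]$ when the shift appearing in Lemma \ref{lem ucp} is absorbed into the modulation of the window --- the analogue, for the concave weight $\ln|\cdot|$, of the bookkeeping carried out for the weight $w^2$ in deriving (\ref{5hsb qpft}). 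Once absolute convergence is in place and this bookkeeping is done, the remaining computations are routine.
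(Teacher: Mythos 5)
Your proposal follows essentially the same route as the paper: apply the SAFT logarithmic uncertainty inequality to $b\mapsto[SAST^{{\mathbf N}}_\Psi f](a,b)$ for each fixed $a$, integrate over $a\in\mathbb R^+$, use Corollary \ref{cor1} on the right-hand side, and reduce the spectral term to $\frac{\mathcal C_\Psi}{\sqrt{2\pi}}\int_{\mathbb R}\ln|w|\,|\mathbf S_{{\mathbf N}}[f](w)|^2dw$ via Lemma \ref{lem ucp} and the admissibility condition (\ref{3.16}), exactly as in the derivation of (\ref{5hsb qpft}). Your additional remarks on the sign of $\ln|\cdot|$ and the weight bookkeeping under the frequency shift are points the paper passes over silently, but the argument itself is the paper's.
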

\begin{proof}
The logarithmic UCP for the SAFT reads \cite{cc1}
\begin{eqnarray}
\nonumber&&\int_{\mathbb R}\ln|t||f(t)|^2dt+\int_{\mathbb R}\ln|w||\mathbf S_{{\mathbf N}}[f](w)|^2dw\\
\label{1log qpft}&&\qquad\ge\left( \frac{\Gamma'(1/4)}{\Gamma(1/4)}+\ln|B|\right)\int_{\mathbb R}|f(t)|^2dt
\end{eqnarray}
Identifying $[SAST^{{\mathbf N}}_\Psi f](a,b)$ as a function of translation parameter $b$ and replacing $f(t)$ by $[SAST^{{\mathbf N}}_\Psi f](a,b)$ in (\ref{1hsb qpft}), we have
\begin{eqnarray}
\nonumber&&\int_{\mathbb R}\ln|b|\left|[SAST^{{\mathbf N}}_\Psi f](a,b)\right|^2db+\int_{\mathbb R}\ln|w|\left|\mathbf S_{{\mathbf N}}\left[[SAST^{{\mathbf N}}_\Psi f](a,b)\right](w)\right|^2dw\\
\label{2log qpft}&&\qquad\qquad\ge\left( \frac{\Gamma'(1/4)}{\Gamma(1/4)}+\ln|B|\right)\int_{\mathbb R}\left|[SAST^{{\mathbf N}}_\Psi f](a,b)\right|^2db
\end{eqnarray}
 On integrating (\ref{2log qpft}) with respect to $da$ , we have
\begin{eqnarray}
\nonumber&&\int_{\mathbb R^+}\left\{\int_{\mathbb R}\ln|b|\left|[SAST^{{\mathbf N}}_\Psi f](a,b)\right|^2db+\int_{\mathbb R}\ln|w|\left|\mathbf S_{{\mathbf N}}\left[[SAST^{{\mathbf N}}_\Psi f](a,b)\right](w)\right|^2dw\right\}da\\
\label{3log qpft}&&\qquad\qquad\ge\left( \frac{\Gamma'(1/4)}{\Gamma(1/4)}+\ln|B|\right)\int_{\mathbb R}\int_{\mathbb R^+}\left|[SAST^{{\mathbf N}}_\Psi f](a,b)\right|^2dadb
\end{eqnarray}
Implementing  Fubini's theorem to LHS and Corollary \ref{cor1} to RHS of (\ref{3log qpft}), it yields
\begin{eqnarray}
\nonumber&&\int_{\mathbb R}\int_{\mathbb R^+}\ln|b|\left|[SAST^{{\mathbf N}}_\Psi f](a,b)\right|^2dadb+\int_{\mathbb R}\int_{\mathbb R^+}\ln|w|\left|\mathbf S_{{\mathbf N}}\left[[SAST^{{\mathbf N}}_\Psi f](a,b)\right](w)\right|^2dadw\\
\label{4log qpft}&&\qquad\qquad\ge\left( \frac{\Gamma'(1/4)}{\Gamma(1/4)}+\ln|B|\right)\frac{\mathcal C_\Psi}{2\pi}\|f\|^2.
\end{eqnarray}
Using procedure of Eq.(\ref{5hsb qpft}) Theorem \ref{thm hsb ucp},  we have
\begin{eqnarray}\label{5log qpft}
\int_{\mathbb R}\int_{\mathbb R^+}\ln|w||\mathbf S_{{\mathbf N}}\left[SAST^{{\mathbf N}}_\Psi f](a,b)\right](w)|^2dadw&=&\frac{\mathcal C_\Psi}{\sqrt{2\pi}}\int_{\mathbb R}\ln|w|\left|\mathbf S_{{\mathbf N}}[f]\left(w\right)\right|^2dw.
\end{eqnarray}
Plugging (\ref{5log qpft}) in (\ref{4log qpft}), we get desired result as
\begin{eqnarray*}
\nonumber&&\int_{\mathbb R}\int_{\mathbb R^+}\ln|b|\left|[SAST^{{\mathbf N}}_\Psi f](a,b)\right|^2dadb+\frac{\mathcal C_\Psi}{\sqrt{2\pi}}\int_{\mathbb R}\ln|w|\left|\mathbf S_{{\mathbf N}}[f]\left(w\right)\right|^2dw\\
&&\qquad\qquad\qquad\ge\frac{\mathcal C_\Psi}{2\pi}\left( \frac{\Gamma'(1/4)}{\Gamma(1/4)}+\ln|B|\right)\|f\|^2.
\end{eqnarray*}

This completes the proof.
\end{proof}
Finally, we shall derive
Nazarov’s UCP which  measures the localization of a function $f$
by taking into consideration the notion of support of the function.
\begin{theorem}Let $[SAST^{{\mathbf N}}_\Psi f](a,b)$ be the SAST of  any signal $f\in L^2(\mathbb R)$ and $V,U$ be two measurable sets of $\mathbb R$. Then there exists a constant $K>0,$ such that
\begin{eqnarray}
\nonumber &&Ke^{K|V||U|}\left\{\int_{\mathbb R\setminus V}\int_{\mathbb R^+}\big|[SAST^{{\mathbf N}}_\Psi f](a,b)\big|^2dadb+\frac{\mathcal C_\Psi}{\sqrt{2\pi}}\int_{\mathbb R\setminus (UB)}|\mathbf S_{{\mathbf N}}[f](w)|^2dw\right\}\\
\label{ucp n4}&&\qquad\qquad\qquad\qquad\qquad\qquad\ge \frac{\mathcal C_\Psi}{2\pi}\|f\|^2,
\end{eqnarray}
where $|V|$ denotes Lebesgue measure measure of $V.$
\end{theorem}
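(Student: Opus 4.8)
The plan is to transplant the known Nazarov inequality for the SAFT onto the function $b\mapsto[SAST^{{\mathbf N}}_\Psi f](a,b)$, then integrate out the scaling variable $a\in\mathbb R^+$ and collapse the resulting terms by means of Corollary \ref{cor1} and the admissibility identity (\ref{3.16}), exactly as was done for Theorems \ref{thm hsb ucp} and \ref{thm log ucp}.

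First I would recall Nazarov's UCP for the SAFT: there is an absolute constant $K>0$ such that for every $h\in L^2(\mathbb R)$ and every pair of finite-measure sets $V,\Sigma\subset\mathbb R$,
\begin{equation*}
\int_{\mathbb R}|h(b)|^2db\le Ke^{K|V||\Sigma|}\left(\int_{\mathbb R\setminus V}|h(b)|^2db+\int_{\mathbb R\setminus\Sigma}\big|\mathbf S_{{\mathbf N}}[h](w)\big|^2dw\right).
\end{equation*}
Applying this with $h=[SAST^{{\mathbf N}}_\Psi f](a,\cdot)$ and $\Sigma=UB$ for each fixed $a$, and then integrating the inequality in $a$ over $\mathbb R^+$ (the constant $Ke^{K|V||UB|}=Ke^{K|B||V||U|}$ pulls out, the factor $|B|$ being absorbed into $K$), I obtain an inequality whose three pieces I evaluate separately.

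For the left-hand side, $\int_{\mathbb R}\int_{\mathbb R^+}\big|[SAST^{{\mathbf N}}_\Psi f](a,b)\big|^2dadb=\tfrac{\mathcal C_\Psi}{2\pi}\|f\|^2$ by Corollary \ref{cor1}, which is precisely the desired right-hand side. The first term on the right becomes $\int_{\mathbb R\setminus V}\int_{\mathbb R^+}\big|[SAST^{{\mathbf N}}_\Psi f](a,b)\big|^2dadb$ by Fubini. For the second term I would invoke Lemma \ref{lem ucp} to express $\big|\mathbf S_{{\mathbf N}}\big[[SAST^{{\mathbf N}}_\Psi f](a,\cdot)\big](w)\big|^2$ as the product of $\tfrac1{2\pi}\big|\mathbf S_{{\mathbf N}}[f](w+aB)\big|^2$ and the squared modulus, at $w/a$, of the SAFT of a chirp-modulated copy of $\Psi$, and then rerun the manipulation of (\ref{5hsb qpft}): translation invariance of Lebesgue measure in the $w$-variable, the modulation property of the SAFT, and finally the admissibility condition (\ref{3.16}) to carry out the inner $a$-integral, producing $\tfrac{\mathcal C_\Psi}{\sqrt{2\pi}}\int_{\mathbb R\setminus(UB)}\big|\mathbf S_{{\mathbf N}}[f](w)\big|^2dw$. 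Reassembling the three evaluations gives (\ref{ucp n4}).

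The step I expect to be the main obstacle is the second one: unlike in Theorem \ref{thm hsb ucp}, the frequency integral is now restricted to $\mathbb R\setminus(UB)$, while the shift $w\mapsto w+aB$ supplied by Lemma \ref{lem ucp} moves that set by an $a$-dependent amount. One therefore has to argue that normalizing the spectral set by $B$ is exactly what makes the substitution compatible with the restriction, or else enlarge the set slightly and absorb the excess into the exponential constant $K$ (harmless, since $Ke^{K|V||U|}$ already hides absolute constants and the factor $|B|$). Beyond this bookkeeping the argument is a verbatim repetition of the machinery already assembled for the Heisenberg and logarithmic principles.
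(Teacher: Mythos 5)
Your proposal follows the paper's proof essentially verbatim: apply the Nazarov inequality in the SAFT domain to $b\mapsto[SAST^{{\mathbf N}}_\Psi f](a,b)$ with spectral set $UB$, integrate over $a\in\mathbb R^+$, use Fubini and Corollary \ref{cor1} for the time-side term and the left-hand side, and rerun the manipulation of (\ref{5hsb qpft}) (Lemma \ref{lem ucp}, translation invariance, modulation property, admissibility (\ref{3.16})) on the frequency-side term. The ``main obstacle'' you flag --- that the shift $w\mapsto w+aB$ supplied by Lemma \ref{lem ucp} moves the restricted set $\mathbb R\setminus(UB)$ by an $a$-dependent amount, so the $a$-integral does not obviously factor into $\mathcal C_\Psi$ --- is not addressed in the paper either: its proof simply invokes ``the procedure of Eq.\ (\ref{5hsb qpft})'' on the restricted integral as if the restriction were unaffected, so your hesitation points to a step the published argument also leaves unjustified rather than to a defect peculiar to your write-up.
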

\begin{proof}For any arbitrary function $f\in L^2(\mathbb R)$ and finite measurable sets $V$ and $U$ of $\mathbb R,$ Nazarov’s UCP in SAFT domain reads \cite{cc1}
\begin{equation}\label{ucp n1}
Ke^{K|V||U|}\left\{\int_{\mathbb R\setminus V}|f(t)|^2dt+\int_{\mathbb R\setminus (UB)}|\mathbf S_{{\mathbf N}}[f](w)|^2dw\right\}\ge\int_{\mathbb R}|f(t)|^2dt,
\end{equation}
where $|.|$ denotes Lebesgue measure and $K$ is positive constant.\\
By identifying $[SAST^{{\mathbf N}}_\Psi f](a,b)$ as a function of translation parameter $b$ followed by invoking Equation (\ref{ucp n1}),
we obtain
\begin{eqnarray}
\nonumber &&Ke^{K|V||U|}\left\{\int_{\mathbb R\setminus V}\big|[SAST^{{\mathbf N}}_\Psi f](a,b)\big|^2db+\int_{\mathbb R\setminus (UB)}|\mathbf S_{{\mathbf N}}\big[[SAST^{{\mathbf N}}_\Psi f](a,b)\big](w)|^2dw\right\}\\
\label{ucp n2}&&\qquad\qquad\qquad\ge\int_{\mathbb R}\big|[SAST^{{\mathbf N}}_\Psi f](a,b)\big|^2db.
\end{eqnarray}
Upon integrating Equation (\ref{ucp n2}) with respect to the $da$ and then implementing  Fubini's theorem to LHS, we have
\begin{eqnarray}
\nonumber &&Ke^{K|V||U|}\left\{\int_{\mathbb R\setminus V}\int_{\mathbb R^+}\big|[SAST^{{\mathbf N}}_\Psi f](a,b)\big|^2dadb+\int_{\mathbb R\setminus (UB)}\int_{\mathbb R^+}|\mathbf S_{{\mathbf N}}\big[[SAST^{{\mathbf N}}_\Psi f](a,b)\big](w)|^2dadw\right\}\\
\label{ucp n3}&&\qquad\qquad\qquad\ge\int_{\mathbb R}\int_{\mathbb R^+}\big|[SAST^{{\mathbf N}}_\Psi f](a,b)\big|^2dadb.
\end{eqnarray}
Finally, applying procedure of Eq.(\ref{5hsb qpft}) Theorem \ref{thm hsb ucp}, to LHS and and Corollary \ref{cor1} to RHS   we have
\begin{eqnarray*}
\nonumber &&Ke^{K|V||U|}\left\{\int_{\mathbb R\setminus V}\int_{\mathbb R^+}\big|[SAST^{{\mathbf N}}_\Psi f](a,b)\big|^2dadb+\frac{\mathcal C_\Psi}{\sqrt{2\pi}}\int_{\mathbb R\setminus (UB)}|\mathbf S_{{\mathbf N}}[f](w)|^2dw\right\}\\
\label{ucp n4}&&\qquad\qquad\qquad\qquad\qquad\qquad\ge \frac{\mathcal C_\Psi}{2\pi}\|f\|^2.
\end{eqnarray*}
This completes the proof of Theorem.
\end{proof}

\section{Potential Applications}\label{secpa}

In order to rectify the limitations of the SAFT and ST,  in this chapter we have proposed the    Stockwell transform, which not only preserves the properties of the classical Stockwell transform but also include a cluster of integral transforms as its special cases viz:  linear canonical Stockwell transform, fractional Stockwell transform and etc. The proposed special affine Stockwell transform have higher flexibility due to presence of  extra degrees of freedom and hence can be employed in optimizing the concentration of the higher frequency spectrum. The potential applications lies in the representation and detection of non-stationary LFM chirp signals, which are widely used in radar and sonar systems. In this  section we show that the proposed SAST is applied to detect
the LFM chirp signals: the mono-component, bi-component and echo interrupted  LFM signals. Moreover with the help of simulations we show that the proposed SAST exhibits better detection performance in comparison with the STFT, WT, WD and classical ST.

\subsection{ Detection of the mono-component and bi-component LFM signals  }

For illustration of detection
of LFM chirp signal, let us take one-component chirp signal as
\begin{equation}\label{eqn app}
f(t)= e^{i2\pi(\alpha t+\beta t^2)}
\end{equation}
where $\alpha$ and $\beta$ represent the initial frequency and frequency rate of $f(t)$, respectively.\\ For better time-frequency representation (TFR), we choose with variable coefficient: $\Psi_{\Delta_{gs}}(t)=1/\sqrt{2\pi\Delta_{gs}}e^{(-t^2/2\Delta^2_{gs})}.$ It is clear that the variation in the parameter $\Delta_{gs}$ affects  the time-frequency domain resolution,  larger values of $\Delta_{gs}$ decreases bandwidth and increases time width and hence improves the frequency domain resolution. Similarly, smaller values of $\Delta_{gs}$ increases bandwidth and decreases time width and hence improves the time domain resolution. From (\ref{eqn qpst2}), it is evident that the time-frequency performance of the proposed quadratic-phase Stockwell transform $[SAST^{{\mathbf N}}_\Psi f](v,\xi)$ depends on variable factor $\Delta_{gs}$ and the real parameter set ${\mathbf N}=(a,b,c,d;p,q)$. For instance, the TFR
 of one-component LFM chirp signal $f (t)$, at $\alpha=80$ and $\beta=30$ obtained by the STFT, WT, ST, scaled-WD and the proposed SAST are depicted in Fig.\ref{mc}, respectively. Fig.\ref{mc}(e)-(f) are obtained respectively by $[SAST^{{\mathbf N}}_\Psi f](v,\xi)$ with $\Delta_{gs}=12, {\bf N}=(12,5,4,0;0,0)$ and $\Delta_{gs}=20, A=(15,-1,4,0;0,0).$ Thus it is clear from Fig.\ref{mc} that with the help of above mentioned parameters the proposed SAST leads to a perfect time–frequency resolution for the one-component chirp signals.\\


 Now taking the bi-component chirp of the form
 \begin{equation}\label{eqn app2}
f(t)= \sum_{r=1}^2e^{i2\pi(\alpha_r t+\beta_r t^2)},
\end{equation}
the TFR
 of bi-component LFM chirp signal (\ref{eqn app2}), at $\alpha_1=80,\,\alpha_2=80$ and $\beta_1=27,\,\beta_2=30$ obtained by the STFT, WT, ST, scaled-WD and the proposed SAST are depicted in Fig.\ref{bc}, respectively. It is clear from Fig.\ref{bc} that the proposed SAST is free from cross-term and provides better resolution for the TFR of the bi-component chirp signal. Thus from Fig.\ref{mc} and Fig.\ref{bc}, we observe that that the proposed SAST has its application in detection of LFM chirp signals and has better resolution than the various existing classes of time-frequency analysis tools.\\

\subsection{ Detection of the echo interrupted LFM chirp signals signals  }
The potential application can also found in the detection of LFM chirp signals in the presence of echo signal which is also a non-stationary chirp signal. In the detection of   LFM chirp signals  containing echo the following procedure is adopted. Let us denote $S_{in}(t)=f_{in}(t)+n(t),$ be the chirp signal with echo [Note:$f_{in}(t)$ desired signal and $n(t)$ is echo ].  Firstly, the SAST  $[SAST^{{\mathbf N}}_\Psi ]$  is used as a transform technique, then the transformed input signal containing echo i.e, $[SAST^{{\mathbf N}}_\Psi S_{in}(t)](v,\xi)$ is multiplied by an appropriate filter impulse response say $[SAST^{{\mathbf N}}_\Psi F(t)](v,\xi)$. Finally the inverse transform (inverse SAST) is employed to
capture the desired echo-free output $S_{o}(t)=f_{out}(t)$.\\

The potential application can also found in the detection of disturbed chirp signals, interfered chirp signals and its usefulness can be seen in the time-frequency representation of the real-world bat echolocation signal \cite{st30,st31,st32,st60}. Moreover, it is clear from \cite{st60} the proposed transform can have potential application in different areas of Bio-medical, Geo-informatics,
Signal processing, Image processing,  Power system and Radar signal and Communication.
 \\In the flip side , the philosophy of the special affine Fourier transform Stockwell transform  UCPs is the same as that for the UCPs in the special affine Fourier transform  setting. The Heisenberg’s UCP for the special affine Fourier transform  states the relation of one signal in spatial and another  signal in
frequency domain. A potential application can be found in the estimation of the lower bound of integral \cite{cc1,[14]}. As such if a signal
in Theorem (\ref{thm hsb ucp}) is determined, it is difficult to obtain the value of the left side of Theorem (\ref{thm hsb ucp}) by calculating the integral. However,
an estimation of the left side of Theorem (\ref{thm hsb ucp}) can be easily obtained based on Theorem(\ref{thm hsb ucp}). Another  potential applications of  the
derived Theorem (\ref{thm hsb ucp}) can be used in the estimation of effective bandwidths \cite{aaa} in the SAST domain. For example if the spread of signal in the time domain $(T_\upsilon)$ is known then the bandwidth of a system that performs SAST cannot be narrower than $\frac{B^2\mathcal C_\Psi}{4T_\upsilon^2}.$
Nevertheless, the convolution structure exhibited
by the special affine Stockwell transform demonstrates that the proposed transform can also be employed
in filter design.

\section{Conclusion}\label{sec con}
 In the study, we introduce special affine Stockwell transform (SAST) by invoking  the convolution structure associated with SAFT, which rectifies the limitations of the
 quadratic-phase Fourier transform  and the Stockwell transform.
 Firstly, we examine the resolution of the SAST in the time and
SAFT domains and then derive some of its basic properties, such as Rayleigh’s energy theorem,
inversion formula and provide a characterization of the range. Besides, we also  derive a
direct relationship between the recently introduced Wigner-Ville distribution and the
proposed SAST. Moreover,
 we investigate Heisenberg,s uncertainty principle,  logarithmic uncertainty principle and Nazarov’s uncertainty principle associated with the proposed SAST.Finally, we study and investigate several applications based on SAST, including the detection of LFM  signals and echo interrupted chirp signals.

\section*{Declarations}
\begin{itemize}
\item  Availability of data and materials: The data is provided on the request to the authors.
\item Competing interests: The authors have no competing interests.
\item Funding: No funding was received for this work
\item Author's contribution: Both the authors equally contributed towards this work.
\item Acknowledgements: This work is supported by the  Research  Grant\\
(No. JKST\&IC/SRE/J/357-60) provided by JKSTIC, Govt. of Jammu and Kashmir,
India.

\end{itemize}

{\bf{References}}
\begin{enumerate}

{\small {
\bibitem{wz1}Abe, S., Sheridan, J.T. : Optical operations on wave functions as the Abelian subgroups of the special
affine Fourier transformation, Opt. Lett., 19, 1801-1803 (1994).
\bibitem{wz2}Abe, S., Sheridan, J.T. : Generalization of the fractional Fourier transformation to an arbitrary linear
lossless transformation: an operator approach, J. Phys., 27(12), 4179-4187 (1994).

\bibitem{8s} Bhandari, A., Zayed, A.I.: Convolution and product theorems for the special
affine Fourier transform. In: Nashed, M.Z., Li, X. (eds.) Frontiers in Orthogonal
Polynomials and q-Series, pp. 119–137. World Scientific, Singapore (2018)
\bibitem{9s}Cai, L.Z.: Special affine fractional Fourier transformation in frequency domain.
Opt. Commun. 185, 271–276 (2000)
\bibitem{3s}Almeida, L. B.: The fractional Fourier transform and time-frequency representations, IEEE
Trans. Sig. Process. 42 3084-3091(1994)
\bibitem{12s}Healy, J. J., Kutay, M.A.,  Ozaktas , Sheridan, J.T.:  Linear Canonical Transforms: Theory
and Applications, New York, Springer, 2016.
\bibitem{13s} James,D. F. V.,Agarwal, G. S.:  The generalized Fresnel transform and its application
to optics. Opt. Commun., 126, 207-212 (1996).

\bibitem{1s}Pei, S.C., Ding, J.J.: Eigenfunctions of the offset Fourier, fractional
Fourier, and linear canonical transforms. J. Opt. Soc. Am.
A 20(3), 522–532 (2003)
\bibitem{100s}Bhandari,
A.,Zayed, A.I.: Shift-invariant and sampling spaces associated with the special affine Fourier
transform. Appl. Comput. Harmon. Anal., 47, 30-52 (2019).
\bibitem{101s}Bhat, M. Y.;  Dar, A. H. The algebra of 2D gabor Quaternion offset linear canonical transform and uncertainty principles,  J. Anal. 2021,
 \bibitem{102s} Dar, A.H., Bhat, M.Y.:  Donoho Starks and Hardy's Uncertainty Principles for the Short-time Quaternion Offset Linear Canonical Transform; Filomat. 13,(2023).

   \bibitem{1102s}  Stern, A.: Sampling of compact signals in the offset linear canonical
domain. Signal Image Video Process. 1(4), 359–367 (2007)
 \bibitem{15w} Qiang, X., Zhen, H.Q., Yu,Q.K.:  Multichannel sampling of signals band-limited in offset linear canonical
transform domains, Circ. Syst. Signal Process. 32(5), 2385-2406 (2013).

 \bibitem{20w} Xiang, Q., Qin, K.:  Convolution, correlation, and sampling theorems for the offset linear canonical
transform, Signal Image Video Process., 8(3), 433-442 (2014).
 \bibitem{21w}  Bhat, M. Y.,  Dar,  A. H,
Convolution and correlation theorems for Wigner–Ville distribution associated with the quaternion
offset linear canonical transform.
Signal, Image and  Video Processing.DOI: 10.1007/s11760-021-02074-2 (2022).

\bibitem{15s}Xiang, Q., Qin, K.: Convolution, correlation, and sampling theorems for the offset linear canonical
transform. Signal Image Video Process. 8(3), 433–442 (2014)
\bibitem{16s}Qiang, X., Zhen, H.Q., Yu, Q.K.: Multichannel sampling of signals band-limited in offset linear canonical
transform domains. Circuits Syst. Signal Process. 32(5), 2385–2406 (2013)
\bibitem{17s}Wei, D.: New product and correlation theorems for the offset linear canonical transform and its applications.
Optik 164, 243–253 (2018).
\bibitem{18s}Zhi, X.,Wei, D., Zhang,W.: A generalized convolution theorem for the special affine Fourier transform
and its application to filtering. Optik 127(5), 2613–2616 (2016)
\bibitem{19s}Zhuo, Z.H.: Poisson summation formulae associated with the special affine Fourier transform and
offset Hilbert transform. Math. Probl. Eng. Article ID 1354129 (2017)
\bibitem{10g}Gabor, D.: Theory of communication. Part 1: the analysis of information. J. Inst. Electr. Eng. Part III
Radio Commun. Eng. 93(26), 429–441 (1946)

\bibitem{st1}Durak, L. , Arikan, O.: Short-time Fourier transform: two fundamental properties and an optimal implementation. IEEE Trans Signal Process ;51:1231–42(2003).
\bibitem{st2}Rioul, O., Vetterli, M.:  Wavelets and signal processing, IEEE Signal Process. Mag. 8  14–38 (1991).
\bibitem{st3}Ali, S.T., Antoine, J.P., Gazeau, J.P.:  Coherent States, Wavelets, and Their Generalizations, Springer, 2015.
\bibitem{st4}Dahlke, S., Maass, P.:  The affine uncertainty principle in one and two dimensions, Comput. Math. Appl. 30 293–305(1995)
\bibitem{st5} Duabechies, I.: Ten Lectures on Wavelets. SIAM, Philadelphia (1992)
\bibitem{st6} Srivastava, H.M., Khatterwani, K., Upadhyay, S.K.: A certain family of fractional wavelet transformations.
Math. Methods Appl. Sci. 42, 3103–3122 (2019)

\bibitem{st11}Stockwell, R.G.,Mansinha, L., Lowe, R.P.:  Localization of the complex spectrum: the S transform, IEEE Trans. Signal Process. 44  998–1001(1996).

\bibitem{st12}Stockwell,R.G.:  A basis for efficient representation of the S-transform, Digit. Signal Process. 17  371–393(2007).

\bibitem{st13}Battisti, U., Riba, L.,  Window-dependent bases for efficient representations of the Stockwell transform, Appl. Comput. Harmon. Anal. 40 292–320 (2016) .
\bibitem{st14}Drabycz, S., Stockwell, R.G., Mitchell, J.R.:   Image texture characterization using the discrete orthonormal S-transform, J. Digit. Imaging 22  696–708(2009).
    \bibitem{ajj2}Akila, L., Roopkumar, R.: Quaternionic Stockwell transform. Integ. Transf.
Spec. Funct. 27(6), 484–504 (2016)
\bibitem{st15}Du, J., Wong,  M.W.,Zhu,  H.:  Continuous and discrete inversion formulas for the Stockwell transform, Integral Transforms Spec. Funct. 18  537–543 (2007).
\bibitem{st16} Hutníková, M., Mišková, A.: Continuous Stockwell transform: coherent states and localization operators, J. Math. Phys. (2015).
\bibitem{st17} Moukadem, A., Bouguila, Z., Ould, D., Abdeslam, Dieterlen, A.:  A new optimized Stockwell transform applied on synthetic and real non-stationary signals, Digit. Signal Process. 46 (2015) 226–238.
\bibitem{st18}Riba, L., Wong, M.W.:  Continuous inversion formulas for multi-dimensional modified Stockwell transforms, Integral Transforms Spec. Funct. 26  9–19(2015).

\bibitem{st19}Bhat, M.Y., Dar, A.H.: Quaternion linear canonical S-transform and associated
uncertainty principles. International Journal of Wavelets, Multiresolution
and Information Processing, 2250035(2022).

\bibitem{st21}Shah, F.A., Tantary, A.Y.:  Non-isotropic angular Stockwell transform and the associated uncertainty principles, Appl. Anal.  1–25(2019).
\bibitem{st22}Singh, S.K.:  The S-transform on spaces of type S, Integral Transforms Spec. Funct. 23  481–494(2012).

\bibitem{st30} Shah, F.A., Tantary, A.Y.: Linear canonical Stockwell transform. J. Math. Anal. Appl. 484, 123673
(2020)

\bibitem{st31} Shah, F.A., Tantary, A.Y.: A family of convolution-based generalized Stockwell
transforms. J. Pseudo-Differ. Oper. Appl. https://doi.org/10.1007/s11868-020-00363-x
(2020)
\bibitem{st32}Wei, D., Zhang,Y.: Fractional Stockwell transform: Theory and applications. Digital SignalProcessing115(2021)103090
\bibitem{st31a} Mejjaoli, H.: Dunkl–Stockwell transform and its applications to the
time–frequency analysis. J. Pseudo-Differ. Oper. Appl. https://doi.org/10.1007/s11868-021-00378-y
(2021)

\bibitem{st50}Abdoush, Y.,  Pojani,  G., Corazza, G.E.,  Garcia-Molina, J.A.:  Controlled-coverage discrete S-transform (CC-DST): theory and applications, Digit. Signal Process. 88  207–222(2019).

\bibitem{st51}Bayram, I., Selesnick,I.W.:  A dual-tree rational-dilation complex wavelet transform, IEEE Trans. Signal Process. 59 (2011) 6251–6256.
\bibitem{st52}Dar, A.H., Bhat, M.Y.:Scaled ambiguity function and scaled Wigner distribution for LCT signals. Optik 267 169678(2022) .
\bibitem{st53}Dar, A.H., Bhat, M.Y.: .Wigner distribution and associated uncertainty principles in the
framework of octonion linear canonical transform.Optik - International Journal for Light and Electron Optics 272  170213 (2023).
\bibitem{zhswd}Zhang, Z.C., Jiang, X., Qiang, S.Z., Sun, A.,  Liang, Z.Y., Shi,  X., Wu, A.Y.:  Scaled Wigner distribution using
fractional instantaneous autocorrelation, Optik, 237 (2021) 166691.
\bibitem{scale}  M.Y Bhat, A.H Dar, Scaled Wigner distribution in the offset linear canonical domain, Optik, Accepted(2022).
\bibitem{cc1}Huo, H.: Uncertainty Principles for the Offset Linear
Canonical Transform, Circuit,systems and signal processing 38,395-406(2019).

\bibitem{[14]}  Folland G.B.,  Sitaram, A.: The uncertainty principle: A mathematical survey, J. Fourier Anal. Appl. 3 (1997) 207-238.
\bibitem{st60}Ranjan, R., Jindal, N., Singh, A.K.: Fractional S-Transform and Its Properties: A Comprehensive
Survey. Wireless Personal Communications.https://doi.org/10.1007/s11277-020-07339-6(2020).
\bibitem{aaa}Stern, A.: Uncertainty principles in linear canonical transform domains and some of their implications in optics, J. Opt. Soc. Am. A-Opt. Image Sci. Vis. 25 (3) 647–652(2008).
}}
\end{enumerate}

\end{document}